\newcommand{\revise}[1]{{\color{blue}#1}}
\begin{document}
%
\title{Over-the-Air Federated Multi-Task Learning Over MIMO Multiple Access Channels}
%
%
%

\author{Chenxi~Zhong,
        Huiyuan~Yang,
        and~Xiaojun~Yuan,~\IEEEmembership{Senior~Member,~IEEE}
    \thanks{C. Zhong, H. Yang and X. Yuan are with the National Key Laboratory of Science and Technology on Communications, University of Electronic Science and Technology of China, Chengdu, China (e-mail: cxzhong@std.uestc.edu.cn; hyyang@std.uestc.edu.cn; xjyuan@uestc.edu.cn). The corresponding author is Xiaojun Yuan.}
}

\maketitle

\begin{abstract}
With the explosive growth of data and wireless devices, federated learning (FL) over wireless medium has emerged as a promising technology for large-scale distributed intelligent systems. Yet, the urgent demand for ubiquitous intelligence will generate a large number of concurrent FL tasks, which may seriously aggravate the scarcity of communication resources. By exploiting the analog superposition of electromagnetic waves, over-the-air computation (AirComp) is an appealing solution to alleviate the burden of communication required by FL. However, sharing frequency-time resources in over-the-air computation inevitably brings about the problem of inter-task interference, which poses a new challenge that needs to be appropriately addressed. In this paper, we study over-the-air federated multi-task learning (OA-FMTL) over the multiple-input multiple-output (MIMO) multiple access (MAC) channel. We propose a novel model aggregation method for the alignment of local gradients of different devices, which alleviates the straggler problem in over-the-air computation due to the channel heterogeneity. We establish a communication-learning analysis framework for the proposed OA-FMTL scheme by considering the spatial correlation between devices, and formulate an optimization problem for the design of transceiver beamforming and device selection. To solve this problem, we develop an algorithm by using alternating optimization (AO) and fractional programming (FP), which effectively mitigates the impact of inter-task interference on the FL learning performance. We show that due to the use of the new model aggregation method, device selection is no longer essential, thereby avoiding the heavy computational burden involved in selecting active devices. Numerical results demonstrate the validity of the analysis and the superb performance of the proposed scheme. 
\end{abstract}

\begin{IEEEkeywords}
Multi-task federated learning, over-the-air model aggregation, \revise{ multiple-input multiple-output multiple access channel}, alternating optimization, fractional programming.
\end{IEEEkeywords}

\section{Introduction}
In recent years, the explosive increase of wireless data has promoted widespread applications of artificial intelligence in many fields, such as computer vision \cite{he2016deep} and natural language processing \cite{young2018recent}. To exploit the diversity of wireless data, the conventional centralized machine learning (ML) paradigm requires edge devices to upload their local data to a central parameter server (PS) for joint training. However, data uploading brings about huge costs in communication resources and potentially threatens user privacy. To tackle these challenges, federated learning (FL) has been proposed as a promising substitute \cite{konevcny2016federated}. In FL, the PS first broadcasts the global model parameters to the selected edge devices. Subsequently, each selected device calculates a local gradient based on its local dataset and upload the gradient to the PS. The global model is then updated by the PS based on the received local gradients. Clearly, by replacing data sharing with gradient sharing, FL reduces communication costs and protects user privacy.

Despite the appealing advantages, the huge uplink communication cost is still a critical bottleneck for FL, especially when uploading through wireless channels. Recently, over-the-air computation (AirComp) has been adopted to improve the communication efficiency of the FL uplink. \cite{nazer2007computation} In over-the-air FL (OA-FL), edge devices transmit their local gradients by using shared radio resources, and aggregate them over the air by utilizing the superposition property of electromagnetic waves. Pioneering work has confirmed that over-the-air FL has strong noise tolerance \cite{zhu2020one} and reduces latency substantially compared with the schemes based on conventional orthogonal multiple access (OMA) protocols \cite{zhu2020broadband, liu2020reconfigurable,zhu2018mimo,liu2021CSIT-Free}. 

However, given the above mentioned benefits, introducing AirComp into FL uplink also brings some tricky problems, such as the so-called straggler\footnote{\revise{ In computer science literature, the word ``straggler'' usually refers to a device with low computation capacity. But here a straggler refers to a device with poor channel condition.} } problem \cite{zhu2020broadband,liu2020reconfigurable}. To aggregate the local gradients at the PS, the devices with better channel conditions have to lower their transmitting powers to align themselves with the stragglers (the devices with the worst channel conditions); see e.g. the state-of-the-art OA-FL literature \cite{liu2020reconfigurable, yang2020federated, shi2021over}. Clearly, since this strict-alignment approach potentially reduces the aggregation signal-to-noise ratio (SNR), it conceivably leads to a degradation of the FL performance. Refs.~\cite{zhu2020broadband} and \cite{liu2020reconfigurable} propose to exclude the stragglers from model aggregation to alleviate this problem. However, this exclusion of devices reduces the available dataset for FL training, thereby deteriorating the FL performance. Thus, developing more efficient approaches to deal with the straggler problem is highly desirable.


Another problem brought by the over-the-air FL uplink appears in the multi-tasking situation. Specifically, 
when multiple tasks are performed simultaneously at the wireless edge\footnote{
The fast development of intelligent systems spawns a large number of FL tasks to meet various demands of intelligence \cite{liu2020federated}.
}, the communication cost of this multi-task FL system will be further exacerbated, which to a greater extent necessitates the use of AirComp. However, introducing AirComp in this multi-task situation implies that the FL tasks share time-frequency resources for gradients uploading, which inevitably introduces inter-task interference. In this paper, we attempt to suppress the inter-task interference by using the multi-antenna technique. 


Moreover, the local gradients of FL exhibit strong correlations temporally (over communication rounds) and spatially (among devices) \cite{han2015learning, chen2021scalecom}.
The temporal correlation has been exploited to improve the uplink efficiency of OA-FL in \cite{fan2021temporal}, whereas an efficient use of the spatial correlation has not been well explored. The existing approaches in OA-FL \cite{cao2021cooperative, yang2020federated} ignore this correlation and assume spatially independent local gradients in the system optimization, potentially leading to a substantial performance loss. Developing a spatial-correlation-aware OA-FL system optimization scheme will be a non-trivial step to exploit the spatial correlation of the local gradients.



\revise{ In this paper, we address the above three challenges by developing a novel over-the-air federated multi-task learning (OA-FMTL) scheme where multiple FL tasks are trained simultaneously over a multiple-input multiple-output (MIMO) multiple access (MAC) channel.} 
\revise{ The MIMO MAC channel consists of a multi-antenna central PS and a number of multi-antenna edge devices.}
\revise{ These FL tasks share time-frequency resources and thus generally interfere with each other in model uploading.} 
\revise{ We establish a communication-learning analysis framework for the considered OA-FMTL scheme. Based on the framework, we analyse the convergence of the OA-FMTL scheme and formulate a transceiver beamforming problem for learning performance enhancement and inter-task interference suppression. We propose a low-complexity algorithm based on alternating optimization (AO) and fractional programming (FP) to optimize the transmit and receive beamforming vectors. The main novelties of our approach are listed as follows:
    \begin{itemize}
        \item \textit{The OA-FMTL analysis framework}: We establish a communication-learning analysis framework for the considered OA-FMTL scheme. Based on the analysis results, we formulate the transceiver beamforming problem and develop a low-complexity solution to the problem.
        \item \textit{Inter-task interference suppression}: To the best of our knowledge, this is the first attempt to solve the problem of inter-task interference suppression in OA-FMTL utilizing the MIMO technique. 
        \item \textit{Spatial-correlation-aware design}: We establish a probability model to capture the gradient correlation among the devices, allowing us to design the OA-FMTL scheme with the awareness of the spatial correlation. We show that the awareness of the spatial correlation brings substantial improvement in learning performance.
        \item \textit{Soft straggler alignment}: We propose a misalignment-tolerant aggregation approach for the gradient aggregation at the PS. This approach corrects the stereotype that gradients need to be strictly aligned without going to the other extreme, i.e., to drop out the stragglers, thereby significantly relieving the performance degradation due to the straggler problem.
    \end{itemize}
}
Extensive numerical results demonstrate that our proposed OA-FMTL scheme achieves significant performance improvements than the existing schemes. Furthermore, the test accuracies of the proposed scheme are very close to those of the ideal error-free benchmarks even in the presence of severe inter-task interference, demonstrating the effectiveness of our proposed scheme.

The remainder of this paper is organized as follows. In Section II, the FL model, the MIMO MAC channel model, the communication system, and the over-the-air model aggregation method are described. In Section III, we analyse the learning performance of the proposed OA-FMTL scheme. In Section IV, we formulate the optimization problem that minimizes the total training loss of the FL tasks and propose algorithms to jointly optimize transmit and receive beamforming vectors, as well as device selection. In Section V, numerical results are presented to evaluate the proposed scheme. Finally, we draw the conclusions in Section VI.

\textit{Notation}: We use $\mathbb{R}$ and $\mathbb{C}$ to denote the real and complex number sets, respectively. We denote scalars in italic type, vectors in straight bold small letters and matrices in straight bold capital letters. $(\cdot)^\dagger$, $(\cdot)^\mathrm{T}$, and $(\cdot)^\mathrm{H}$ are used to denote the conjugate, the transpose, and the conjugate transpose, respectively. We use $s[d]$ to denote the $d$-th entry of vector $\mathbf{s}$, $\mathbf{s}(1:D)$ to denote a sub-vector of $\mathbf{s}$ with entries indexed from $1$ to $D$, $\mathcal{CN}(\mu, \sigma^2)$ to denote the circularly-symmetric complex normal distribution with mean $\mu$ and covariance $\sigma^2$, $\mathbb{E}[\cdot]$ to denote the expectation operator, and $|\mathcal{S}|$ to denote the cardinality of set $\mathcal{S}$. $\mathbf{I}_N$, $\mathbf{1}_{N\times M}$, and $\mathbf{0}_{N\times M}$ are used to denote the $N\times N$ identity matrix, the $N\times M$ all-one matrix, and the $N\times M$ all-zero matrix, respectively. We use $\|\cdot\|$ to denote the $l_2$-norm. $[K]$ denotes the set $\{k | 1 \leq k \leq K\}$.

\section{System Model}
\revise{ In this paper, we consider a $K$-task OA-FMTL system with one central PS and $M$ wireless devices, where $M_k$ wireless devices are collaboratively training an identical model for task $k$\footnote{\revise{ For simplicity, we assume that each device is assigned to a single task. The extension to the case of one device serving multiple tasks is straightforward. We omit detailed discussions due to space limitation.}}, $k\in[K]$, and $M = \sum_{k=1}^K M_k$. A two-task OA-FMTL system is illustrated in Fig.~\ref{fig:system_model}. 
In the following, we introduce the FL system, the underlying communication system, the over-the-air model aggregation, and OA-FMTL framework, sequentially.}

\begin{figure}[htbp]
\centering
\includegraphics[width=0.6\textwidth]{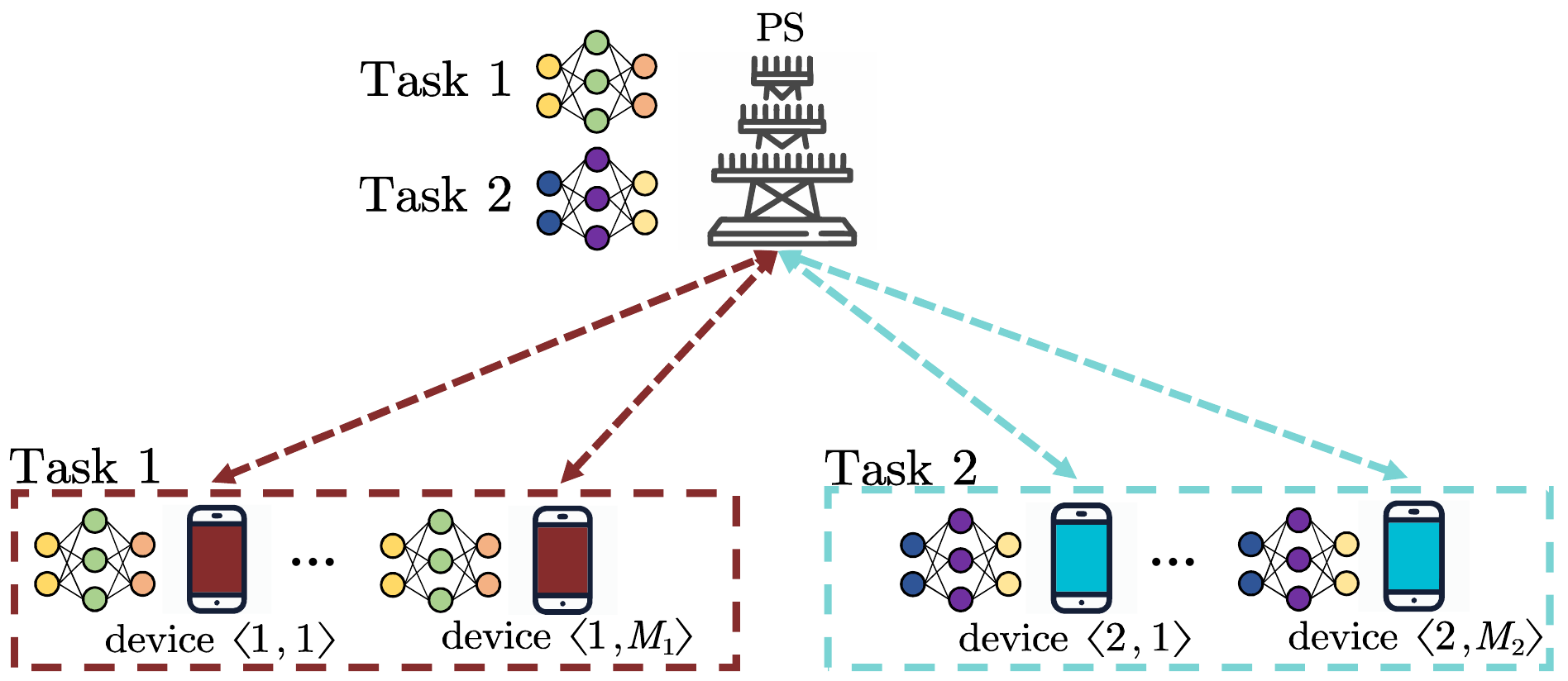}
\caption{FL interference network with two tasks.}
\label{fig:system_model}
\end{figure}

\subsection{Federated Learning System}


We start with the description of the $K$-task FL system. \revise{ Let $\mathcal{A}_k$ denote the training dataset of the $k$-th FL task (which is stored on the $M_k$ devices). Let $Q_k$ be the cardinality of $\mathcal{A}_k$.} Each FL task $k$ has an empirical loss function based on $\mathcal{A}_k$, given by
\begin{equation}
    \label{LossFuncVector}
    F_k(\mathbf{w}_k) \triangleq \frac{1}{Q_k} \sum_{n=1}^{Q_k} f_k\left( \mathbf{w}_k; \boldsymbol{\xi}_{k,n}\right), ~\forall k \in [K],
\end{equation}
where $\mathbf{w}_k \in \mathbb{R}^D$ denotes the parameter vector of task $k$, \footnote{ If the lengths of the parameter vectors vary among the tasks, zero padding is employed to ensure that $\{\mathbf{w}_k\}_{k=1}^K$ have identical length $D$.}, $\boldsymbol{\xi}_{k,n} \in \mathcal{A}_k$ denotes the $n$-th training sample in $\mathcal{A}_k$, and $f_k\left(\mathbf{w}_k; \boldsymbol{\xi}_{k,n}\right)$ denotes the sample-wise loss function with respect to (w.r.t.) $\boldsymbol{\xi}_{k,n}$. Let subscript $\langle k, i \rangle$ denote the index of the $i$-th device in the $k$-th group, $\forall k \in [K]$, $i \in [M_k]$. \revise{We assume that the local training datasets of each device $\langle k,i \rangle$ in group $k$ are independently \revise{and} randomly \revise{drawn} from $\mathcal{A}_k$, $k\in[K]$.} Let $Q_{\langle k, i \rangle}$ denote the sample size of the $\langle k, i \rangle$-th device. Naturally, we have $Q_k = \sum_{i=1}^{M_k} Q_{\langle k, i \rangle}$. Then $F_k(\mathbf{w}_k)$ can be rewritten as
\begin{equation}
    \label{LossFuncTaskk}
    F_k(\mathbf{w}_k)=\frac{1}{Q_k} \sum_{i=1}^{M_k} Q_{\langle k, i \rangle} F_{\langle k, i \rangle}(\mathbf{w}_k), ~\forall k \in [K],
\end{equation}
where $F_{\langle k, i \rangle}(\mathbf{w}_k) \triangleq \sum_{n=1}^{ Q_{\langle k, i \rangle}} f_k \left(\mathbf{w}_k; \boldsymbol{\xi}_{\langle k, i \rangle,n} \right)/Q_{\langle k, i \rangle}$ denotes the local loss function of the $\langle k,i\rangle$-th device with $\boldsymbol{\xi}_{\langle k, i \rangle,n}$ being the $n$-th training sample on the $\langle k,i \rangle$-th device. 
The overall loss function of the considered OA-FMTL is given by
\begin{equation}
    \label{LossFuncWeightedSum}
    \mathcal{F}(\mathbf{w}) \triangleq \sum_{k=1}^K F_k(\mathbf{w}_k),
\end{equation}
where $\mathbf{w} \triangleq [\mathbf{w}_1^\mathrm{T}, \dots, \mathbf{w}_K^\mathrm{T}]^\mathrm{T}$. 

\revise{OA-FMTL aims to minimize $\mathcal{F}(\mathbf{w})$ by separately minimize each $F_k(\mathbf{w}_k)$ using gradient descent (GD) \cite{konevcny2016federated}. 
We now focus on the training of the $k$-th task.}
To be specific, at the $t$-th communication round, the training of each FL task $k$ consists of the following five steps:
\begin{itemize}
    \item \textit{Device selection}: \revise{The parameter server (PS) selects a \revise{sub-group $\mathcal{M}_k$ of group $k$} to participate in the updating.}
	\item \textit{Global model broadcasting}:
    The PS broadcasts \textit{global model} $\mathbf{w}_k^{(t)}$ to the selected devices through error-free links.
    \item \textit{Local gradients computing}: Each selected device computes its gradient based on the local training dataset, \revise{i.e., to compute the local gradient $\mathbf{g}_{\langle k, i \rangle}^{(t)}$ by}
    \begin{equation}
        \label{GradDef}
        \mathbf{g}_{\langle k, i \rangle}^{(t)} \triangleq \nabla F_{\langle k, i \rangle}(\mathbf{w}_k^{(t)}) = \frac{1}{Q_{\langle k, i \rangle}} \sum\nolimits_{n=1}^{ Q_{\langle k, i \rangle}} \nabla f_k\left(\mathbf{w}_k^{(t)}; \boldsymbol{\xi}_{\langle k, i \rangle,n} \right), \revise{\forall i \in \mathcal{M}_k.}
    \end{equation}
    \item \textit{Local gradients uploading}: Selected devices upload their local gradients to the PS.
    \item \textit{Global model updating}: \revise{The PS computes the aggregated gradient $\mathbf{g}_k^{(t)}$ by
    \begin{equation}
        \label{idealgradient}
        \mathbf{g}_k^{(t)} \triangleq \frac{ \sum\nolimits_{i \in \mathcal{M}_k} Q_{\langle k, i \rangle} \mathbf{g}_{\langle k, i \rangle}^{(t)} }{\sum_{i \in \mathcal{M}_k} Q_{\langle k, i \rangle}}.
    \end{equation}}
    The PS then updates the model $\mathbf{w}_k^{(t)}$ by 
    \begin{equation}
        \label{SGD}
        \mathbf{w}_k^{(t+1)}=\mathbf{w}_k^{(t)}- \eta_k \mathbf{g}_k^{(t)},
    \end{equation}
    where $\eta_k \in \mathbb{R}$ is the learning rate of task $k$.
\end{itemize}

\subsection{\revise{MIMO MAC Channel}}
\label{MIMO_MAC_Channel}

\revise{We now introduce the wireless channel model of OA-FMTL. Following the common practice in OA-FL\cite{amiri2020federated,cao2022transmission,sery2021over}, we assume the global model is broadcast to the devices via error-free links and the local gradient uploading is synchronized among the devices \footnote{\revise{The synchronization can be realized by using the existing techniques, e.g., the timing-advance mechanism for uplink synchronization in 4G Long Term Evolution (LTE) \cite{zhu2021over}. }}. Recall that OA-FMTL is composed of a PS (i.e., a base station) and $M$ edge devices. We assume the PS and each edge device are respectively equipped with $N_{\mathrm{R}}$ and $N_{\mathrm{T}}$ antennas, which leads to a multiple-input multiple-output (MIMO) multiple access (MAC) channel.
We further assume a block-fading channel model, where the channel state keeps invariant during the step of local gradients uploading}\footnote{\revise{ We assume that perfect CSI is available at the PS. 
Studies on CSI acquisition over MIMO MAC channels can be found, e.g., in \cite{nguyen2013compressive,wen2014channel}.}}. 

We next focus on the local gradient uploading process.  
\revise{Let $C$ denote the total number of channel uses in each communication round.} Then, at the $t$-th communication round, the received signal matrix at the PS is denoted by 
\begin{equation}
    \mathbf{Y}^{(t)} = \sum\nolimits_{k=1}^K \sum\nolimits_{i \in \mathcal{M}_k} \mathbf{H}_{\langle k, i \rangle}^{(t)} \mathbf{X}_{\langle k, i \rangle}^{(t)} + \mathbf{N}^{(t)},
    \label{channelmodel}
\end{equation}
\revise{where $\mathbf{H}_{\langle k, i \rangle}^{(t)} \in \mathbb{C}^{ N_{\mathrm{R}} \times N_{\mathrm{T}}}$ denotes the channel matrix between the $\langle k, i \rangle$-th device and the PS, $\mathbf{X}_{\langle k, i \rangle}^{(t)} \triangleq \big [ \mathbf{x}_{\langle k, i \rangle}^{(t)}[1], \cdots, \mathbf{x}_{\langle k, i \rangle}^{(t)}[C] \big ] \in \mathbb{C}^{N_{\mathrm{T}} \times C}$ with $\mathbf{x}_{\langle k, i \rangle}^{(t)}[c]$ denotes the signal transmitted by the $\langle k, i \rangle$-th device at the $c$-th channel use, $\mathbf{N}^{(t)} \triangleq [\mathbf{n}^{(t)}[1], \cdots, \mathbf{n}^{(t)}[C]] \in \mathbb{C}^{N_{\mathrm{R}} \times C}$ is an additive white Gaussian noise (AWGN) matrix whose entries are independently drawn from $\mathcal{CN}(0,\sigma^2)$, and $\mathbf{Y}^{(t)} \triangleq \big[ \mathbf{y}^{(t)}[1], \cdots, \mathbf{y}^{(t)}[C] \big] \in \mathbb{C}^{ N_{\mathrm{R}} \times C}$ with $\mathbf{y}^{(t)}[c]$ denotes the received signal at the PS at the $c$-th channel use.
	
Note that the two sums in \eqref{channelmodel} imply that all the edge devices are allowed to participate in training transmit using shared radio resources. The superposition characteristic of electromagnetic waves can be utilized for signal aggregation, as suggested by AirComp, while this non-orthogonal transmission introduces inter-task interference. In the next subsection, we show how to aggregate local gradients over the air using the underlying communication channel in \eqref{channelmodel}.
}

\subsection{Over-the-Air Model Aggregation}
\revise{In this subsection, we introduce the approach of over-the-air gradient aggregation based on the underlying communication system in Section \ref{MIMO_MAC_Channel}. As illustrated in Fig.~\ref{fig:flow_figure}, the local gradients $\{\mathbf{g}_{\langle k, i \rangle}^{(t)}\mid k \in [K], i \in \mathcal{M}_k\}$ are first pre-processed to yield signals $\{\mathbf{X}_{\langle k, i \rangle}^{(t)} \mid  k \in [K], i \in \mathcal{M}_k\}$, which are then transmitted to the PS via the underlying communication channel in \eqref{channelmodel}. After the PS receives $\mathbf{Y}^{(t)}$, it separately performs post-processing to obtain $\{\hat{\mathbf{g}}_k^{(t)}\}_{k=1}^K$, which are estimates of the desired aggregated gradients $\{\mathbf{g}_k^{(t)}\}_{k=1}^K$. In the following, we introduce the pre-processing and post-processing steps.}

\begin{figure}[htbp]
\centering
\includegraphics[width=0.8\textwidth]{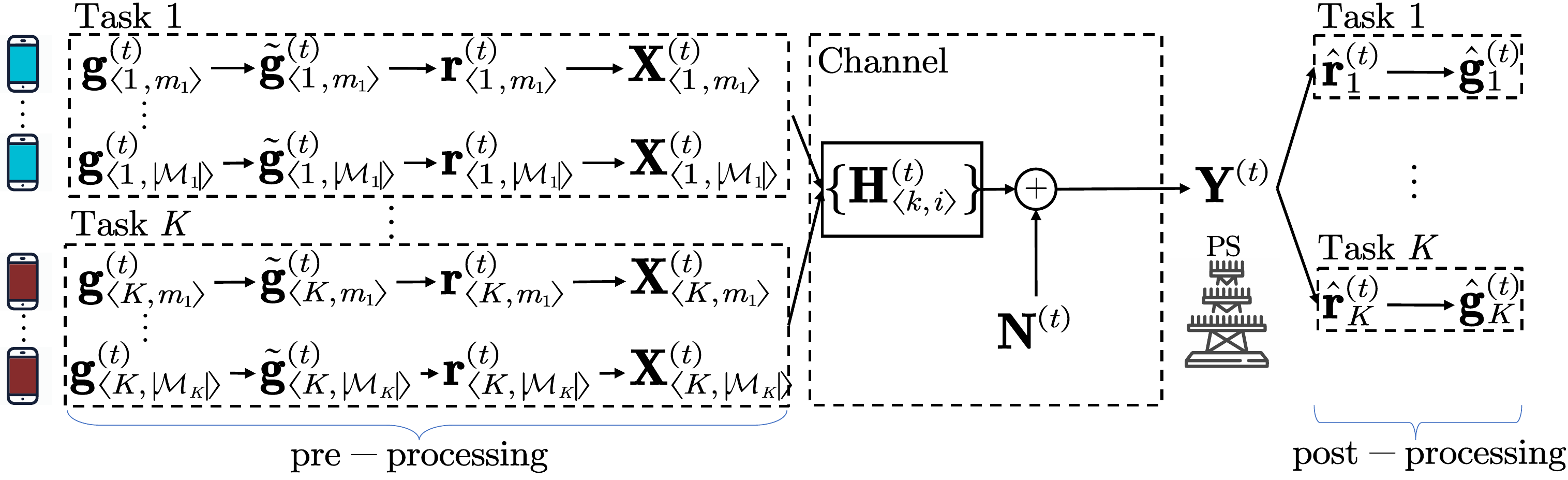}
\caption{Over-the-air gradient aggregation.}
\label{fig:flow_figure}
\end{figure}

\subsubsection{Pre-processing}
This step performs separately on each participated device, aims to generate appropriate transmitting signals. Specifically, we first normalize $\mathbf{g}_{\langle k, i \rangle}^{(t)}$ to $\mathbf{\tilde{g}}_{\langle k, i \rangle}^{(t)}\in \mathbb{R}^D$ by
\begin{equation}
    \label{nomalize}
    \tilde{\mathbf{g}}_{\langle k, i \rangle}^{(t)}[d] = \big (\mathbf{g}_{\langle k, i \rangle}^{(t)}-\bar{g}_{\langle k, i \rangle}^{(t)} \mathbf{1}_{D \times 1}\big )/\sqrt{v_{\langle k, i \rangle}^{(t)}}, 
\end{equation}
where $\bar{g}_{\langle k, i \rangle}^{(t)} = \frac{1}{D}\sum\nolimits_{d=1}^D g_{\langle k, i \rangle}^{(t)}[d]$ and $v_{\langle k, i \rangle}^{(t)} = \frac{1}{D}\sum\nolimits_{d=1}^D \left|g_{\langle k, i \rangle}^{(t)}[d] - \bar{g}_{\langle k, i \rangle}^{(t)}\right|^2$.
Following the common practice \cite{liu2020reconfigurable,lin2021deploying}, we assume $\{\bar{g}_{\langle k, i \rangle}^{(t)}, v_{\langle k, i \rangle}^{(t)} | \forall k, i, t\}$ are transmitted to the PS via error-free links. 

\revise{To match the complex communication system in \eqref{channelmodel}, we then modulate (analog modulation) the normalized gradient $\mathbf{\tilde{g}}_{\langle k, i \rangle}^{(t)}$ to a complex data vector $\mathbf{r}_{\langle k, i \rangle}^{(t)}$, i.e.,
\begin{equation}
    \mathbf{r}_{\langle k, i \rangle}^{(t)} \triangleq 
    \mathbf{\tilde{g}}_{\langle k, i \rangle}^{(t)}\left (1:\frac{D}{2} \right ) + 
    j \mathbf{\tilde{g}}_{\langle k, i \rangle}^{(t)}\left (\frac{D+2}{2} : D\right ) \in \mathbb{C}^C,
    \label{modulation}
\end{equation}
where $C = D/2$\footnote{For simplicity, we assume that $D$ is even.}.
We transmit $\mathbf{r}_{\langle k, i \rangle}^{(t)}$ with $C$ times of channel use (one element of $\mathbf{r}_{\langle k, i \rangle}^{(t)}$ for one time of channel use). Specifically, the transmitted signal is given by
\begin{equation}
\mathbf{X}_{\langle k, i \rangle}^{(t)} \triangleq \mathbf{u}_{\langle k, i \rangle}^{(t)} \mathbf{r}_{\langle k, i \rangle}^{(t)}{}^\mathrm{T} \in \mathbb{C}^{N_{\mathrm{T}} \times C},
\end{equation}
where $\mathbf{u}_{\langle k, i \rangle} \in \mathbb{C}^{N_{\mathrm{T}}}$ is the transmit beamforming vector of the $\langle k, i \rangle$-th device. Recall that $\mathbf{X}_{\langle k, i \rangle}^{(t)} = \big [ \mathbf{x}_{\langle k, i \rangle}^{(t)}[1], \cdots, \mathbf{x}_{\langle k, i \rangle}^{(t)}[C] \big ]$. Thus we have $\mathbf{x}_{\langle k, i \rangle}^{(t)}[c] = r_{\langle k, i \rangle}^{(t)}[c] \mathbf{u}_{\langle k, i \rangle}$ denoting the transmitted signal of the $\langle k, i \rangle$-th device at the $c$-th channel use, satisfying the following power constraint:
\begin{equation}
    \mathbb{E}\left[\| \mathbf{x}_{\langle k, i \rangle}^{(t)}[c]\|^2\right] = 2\|\mathbf{u}_{\langle k, i \rangle}\|^2 \leq P_0,
    \label{transbf}
\end{equation}
where the equality follows the normalization in \eqref{nomalize} (implying $\mathbb{E}\big[|r_{\langle k, i \rangle}^{(t)}[c]|^2 \big]\!=\!2$)}. 

\revise{\subsubsection{Post-processing} The received signals from the $N_{\mathrm{T}}$ receive antennas are combined separately using $K$ receive beamforming vectors to obtain $\mathbf{\hat{r}}_k^{(t)}$, i.e.,
\begin{equation}
    \mathbf{\hat{r}}_k^{(t)}\!=\!\zeta_k\!\left(\!\mathbf{f}_k^{\mathrm{H}} \mathbf{Y}_k^{(t)}\!\right){}^\mathrm{T}\!\!=\!\zeta_k\!\Big(\!
    \sum_{i \in \mathcal{M}_k}\!\mathbf{r}_{\langle k, i \rangle} \!(\mathbf{H}_{\langle k, i \rangle}^{(t)}\mathbf{u}_{\langle k, i \rangle}\!){}^\mathrm{T}
    \!+\!\sum_{l \neq k}\!\sum_{i \in \mathcal{M}_{l}}\!\mathbf{r}_{\langle l, i \rangle}^{(t)} \!(\mathbf{H}_{\langle l, i \rangle}^{(t)} \mathbf{u}_{\langle l, i \rangle}\!){}^\mathrm{T} 
    \!+\!\mathbf{N}^{(t)}{}^\mathrm{T}\!\Big) \mathbf{f}_k^{\dagger},\!k\!\in \![K],
    \label{receivedsignal}
\end{equation}
where $\zeta_k \in \mathbb{R}$ is a weighting factor and $\mathbf{f}_k \in \mathbb{C}^{N_{\mathrm{R}}}$ is the normalized receive beamforming vector of task $k$ with $\|\mathbf{f}_k\| = 1$. Estimates of the desired aggregated gradients $\{\mathbf{\hat{g}}_k^{(t)}\}_{k=1}^K$ is reconstructed from $\{\mathbf{\hat{r}}_k^{(t)}\}_{k=1}^K$ by
\begin{equation}
    \mathbf{\hat{g}}_k^{(t)} = \frac{1}{\sum_{i \in \mathcal{M}_k} Q_{\langle k, i \rangle}} \left[ 
    \Re\{\mathbf{\hat{r}}_k^{(t)}\}^\mathrm{T} , \,
    \Im\{\mathbf{\hat{r}}_k^{(t)}\}^\mathrm{T} \right]^\mathrm{T} + \bar{g}_k^{(t)}\mathbf{1}_{D \times 1} \in \mathbb{R}^D, ~k\in[K], 
    \label{receivedgradient}
\end{equation}
where $\bar{g}_k^{(t)} \triangleq (\sum_{i \in \mathcal{M}_k} Q_{\langle k, i \rangle} \bar{g}_{\langle k, i \rangle}^{(t)}) / ( \sum_{i \in \mathcal{M}_k} Q_{\langle k, i \rangle} ) $.
}

\subsection{OA-FMTL Framework}
We first describe the model aggregation error of each task $k$, and then summarize the OA-FMTL framework in this subsection. From the channel model in \eqref{channelmodel}, the model aggregations $\{\mathbf{\hat{g}}_k^{(t)}\}_{k=1}^K$ inevitably suffer from distortions caused by the inter-task interference and the channel noise.
The global model $\mathbf{w}_k^{(t)}$ of task $k$ is updated with the gradient aggregation $\mathbf{\hat{g}}_k^{(t)}$:
\begin{equation}
    \label{SGD_error_updates}
    \mathbf{w}_k^{(t+1)}=\mathbf{w}_k^{(t)}- \eta_k \mathbf{\hat{g}}_k^{(t)} = \mathbf{w}_k^{(t)}- \eta_k \left( \nabla F_k(\mathbf{w}_k^{(t)}) - \mathbf{e}_k^{(t)}\right),
\end{equation}
where $\nabla F_k(\mathbf{w}_k^{(t)}) \triangleq \frac{1}{Q_k} \sum_{n=1}^{Q_k} \nabla f_k\left( \mathbf{w}_k; \boldsymbol{\xi}_{k,n}\right) $ is the gradient of the loss function of the $k$-th task $F_k(\mathbf{w}_k)$ at $\mathbf{w}_k = \mathbf{w}_k^{(t)}$,
\revise{ and $\mathbf{e}_{k}^{(t)}$ is the error caused by gradient uploading, which can be divided into two parts: 
\begin{equation}
    \mathbf{e}_{k}^{(t)} = 
    \underbrace{\nabla\!F_k (\!\mathbf{w}_{k}^{(t)}\!)\!-\!\mathbf{g}_{k}^{(t)}}_{\mathbf{e}_{\mathrm{ds},k}^{(t)} } 
    \!+\!\underbrace{\mathbf{g}_{k}^{(t)} - \hat{\mathbf{g}}_{k}^{(t)} }_{ \mathbf{e}_{\mathrm{com},k}^{(t)} }, 
    \label{ErrorTwoPart}
\end{equation}
where $\mathbf{e}_{\mathrm{ds},k}^{(t)}$ denotes the error caused by device selection, and $\mathbf{e}_{\mathrm{com},k}^{(t)}$ denotes the communication error due to the inter-task interference and the noise. To alleviate the impact of $\mathbf{e}_{k}^{(t)}$ on the learning performance, the PS optimizes the device selection, transmit and receive beamforming.

We summarize the OA-FMTL framework described above in Algorithm \ref{alg:FL_framework}. Besides, we define the following variables for notational brevity:
$\mathcal{M} \triangleq \{\mathcal{M}_k\}_{k=1}^{K}$,
$\mathbf{f} \triangleq \{\mathbf{f}_k\}_{k=1}^K$, 
$\mathbf{u}_k \triangleq \{\mathbf{u}_{\langle k, i \rangle}\}_{i \in \mathcal{M}_k}$, and $\mathbf{u} \triangleq \{\mathbf{u}_k\}_{k=1}^{K}$.
In the following section, we establish the connection between the model aggregation errors in \eqref{ErrorTwoPart} and the learning performance of the OA-FMTL.
\begin{algorithm}[htb]
\caption{OA-FMTL framework} 
\label{alg:FL_framework} 
\begin{algorithmic}[1] 
\REQUIRE $T$, $\{Q_{\langle k, i \rangle}\}$. 
\STATE {\textbf{Initialization:}} $t = 0$, the global models $\{\mathbf{w}^{(0)}_k\}$ on the PS.
\FOR{ $t \in [T]$ }
    \STATE The PS estimates the CSI and optimize $(\mathcal{M}, \mathbf{f}, \mathbf{u})$;
    \STATE The PS sends the global models $\{\mathbf{w}^{(t)}_k\}$ to the devices through orthogonal transmission;
        \FOR{$k \in [K], i \in \mathcal{M}_k$ in parallel}
            \STATE  Device $\langle k, i \rangle$ computes its local gradient $\mathbf{g}_{\langle k, i \rangle}^{(t)}$ based on the local dataset based on \eqref{GradDef};
            \STATE  Device $\langle k, i \rangle$ uploads its gradient $\mathbf{g}_{\langle k, i \rangle}^{(t)}$ to the PS via \eqref{nomalize}-\eqref{transbf};
        \ENDFOR
    \STATE The PS recovers the aggregated gradient $\mathbf{\hat{g}}_k^{(t)}$ of each task $k$ based on \eqref{receivedsignal} and \eqref{receivedgradient};
    \STATE The PS updates the global models $\{\mathbf{w}^{(t+1)}_k\}$ based on \eqref{SGD_error_updates};
\ENDFOR 
\end{algorithmic}
\end{algorithm}
}

\section{Performance Analysis}
In this section, we analyse the learning performance of the OA-FMTL. We start with some standard assumptions on the loss functions $\{F_k(\cdot)\}_{k=1}^K$,

\subsection{Preliminaries}
To conduct convergence analysis, following the stochastic optimization literature \cite{friedlander2012hybrid, bertsekas1995neuro}, we make the following four assumptions on $\{F_k(\cdot)\}_{k=1}^K$:
\begin{assumption}
    \label{asp:f_Lipschitz}
    For each task $k$, the loss function $F_k$ is continuously differentiable, and the gradient $\nabla F_k(\cdot)$ is uniformly Lipschitz continuous with parameter $\omega_k$, i.e., 
    \begin{equation}
        \label{Lipschitz}
        \| \nabla F_k(\mathbf{w}) - \nabla F_k(\mathbf{w^\prime})\| \leq \omega_k \|\mathbf{w} - \mathbf{w^\prime} \|, \forall \mathbf{w}, \mathbf{w^\prime} \in \mathbb{R}^D, k \in [K].
    \end{equation} 
\end{assumption}
\begin{assumption}
    \label{asp:f_convex}
    For each task $k$, loss function $F_k$ is strongly convex with positive parameter $\mu_k$:
    \begin{equation}
        F_k(\mathbf{w}) 
        \geq F_k(\mathbf{w^\prime}) + (\mathbf{w} - \mathbf{w^\prime})^\mathrm{T} \nabla F_k(\mathbf{w^\prime}) + \frac{\mu_k}{2} \| \mathbf{w} - \mathbf{w^\prime}\|^2, \forall \mathbf{w}, \mathbf{w^\prime} \in \mathbb{R}^D, k \in [K].
    \end{equation}
\end{assumption}
\begin{assumption}
    \label{asp:f_twice_diff}
    $F_k(\cdot), k \in [K]$ are twice-continuously differentiable.
\end{assumption}
\begin{assumption}
    \label{asp:f_bound}
    The gradient vector is upper bounded by
    \begin{equation}
        \label{eq:f_bound}
        \begin{aligned}
            \| \nabla f_k\left(\mathbf{w}_k; \boldsymbol{\xi}_{k,n}\right) \|^2 \leq \beta_1 + \beta_2 \| \nabla F_k(\mathbf{w}_k^{(t)}) \|^2, \forall k \in [K],\\
        \end{aligned}
    \end{equation}
    for some constants $\beta_{1} \geq 0$ and $\beta_{2} \geq 0$. Both $\beta_{1}$ and $\beta_{2} $ are constants shared by $\{F_k(\cdot)\}_{k=1}^K$.
\end{assumption}
\begin{figure}[htbp]
    \centering
    \includegraphics[width=0.9\textwidth]{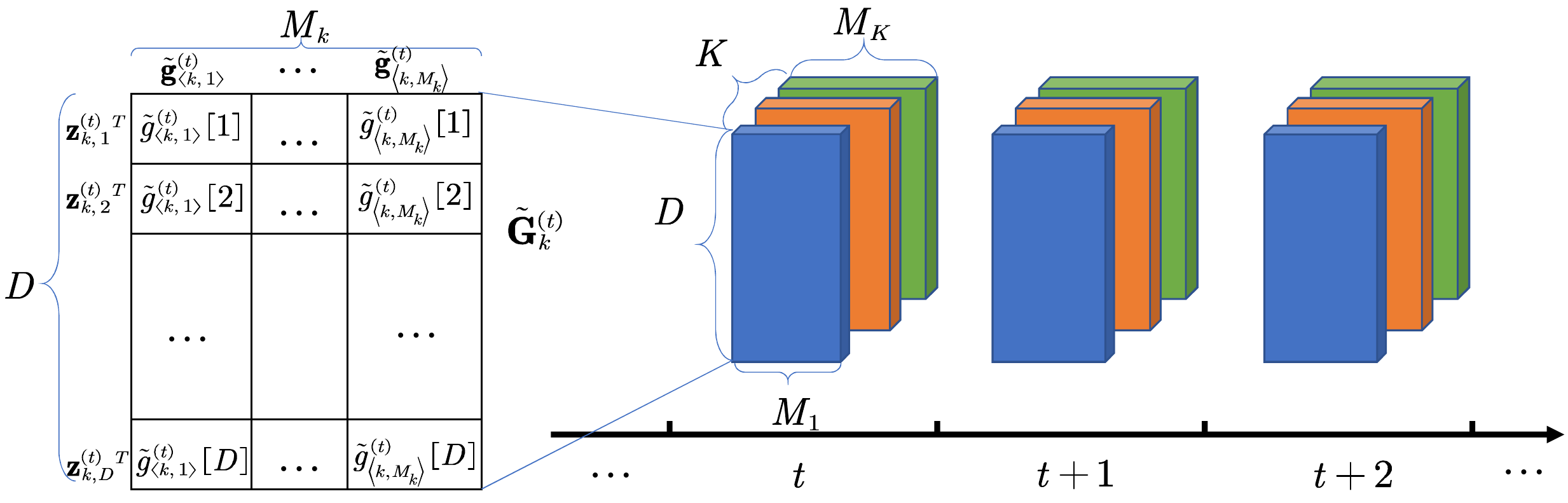}
    \caption{An illustration of the gradients in the OA-FMTL}
    \label{fig:grad_mtrx}
\end{figure}
Furthermore, unlike the approaches in \cite{cao2021cooperative, yang2020federated} where the local gradients from a common task are treated as independent, we find that these local gradients are highly correlated in a typical learning task. As such, it is of critical importance to understand the impact of the spatial correlation between gradients on the learning performance. To this end, we introduce a probability model for the gradients as follows. Let $\mathbf{\tilde{G}}_k^{(t)} \triangleq [\mathbf{\tilde{g}}_{\langle k,1 \rangle}^{(t)}, \cdots, \mathbf{\tilde{g}}_{\langle k,M_k \rangle}^{(t)}] \in \mathbb{R}^{D \times M_k}$ be the local gradients from the $M_k$ devices of task $k$ at the $t$-th round. Let $\mathbf{z}_{k,d}^{(t)} \in \mathbb{R}^{M_k}$ be the $d$-th dimension of the local gradients from a common task, and $\mathbf{z}_{k,d}^{(t)}{}^\mathrm{T}$ is the $d$-th row of $\mathbf{\tilde{G}}_k^{(t)}$, as shown in Fig.~\ref{fig:grad_mtrx}.
To track the correlation of the gradients, we make the following assumption on the distribution of the gradients elements.
\begin{assumption}
    \label{asp:g_task}
    For the $t$-th communication round, the gradient matrices $\{\mathbf{\tilde{G}}_k^{(t)} | k \in [K]\}$ are independent and non-identically distributed. For the $k$-th task, the gradients $\{\mathbf{z}_{k,d}^{(t)} | d \in [D]\}$, are independent and identically distributed. That is,
    \begin{equation}
        p^{(t)} \left(\{\mathbf{\tilde{G}}_k^{(t)} | k \in [K]\} \right) = \prod_{k=1}^K \prod_{d=1}^D p_k^{(t)} \left( \mathbf{z}_{k,d}^{(t)} \right), \forall d,
    \end{equation}
    where $p^{(t)}(\cdot)$ denotes the distribution of the elements of $\{\mathbf{\tilde{G}}_k^{(t)} | k \in [K]\}$, and $p_k^{(t)}(\cdot)$ denotes the distribution of the elements of $\mathbf{z}_{k,d}^{(t)}$. Furthermore, for each task $k$, the local gradients of the $M_k$ devices have the same degree of variation, i.e., $v_{\langle k,1 \rangle}^{(t)} = \cdots = v_{\langle k,M_k \rangle}^{(t)} = v_k^{(t)} $.
\end{assumption}

We now focus on the spatial correlation between the local gradients in a common task, i.e., between the entries of $\mathbf{z}_{k,d}^{(t)}$. 
The auto-correlation matrix for task $k$ is then defined by
\begin{equation}
    \label{rho}
    \boldsymbol{\rho}_k^{(t)} \triangleq \mathbb{E}\big[\mathbf{z}_{k,d}^{(t)} (\mathbf{z}_{k,d}^{(t)})^\mathrm{T} \big] \in \mathbb{R}^{M_k\times M_k}, \forall d
\end{equation}
where the $(i,j)$-th entry is denoted as $\rho_{\langle k, i \rangle,\langle k, j \rangle}^{(t)}$, measuring the spatial correlation between device $i$ and device $j$ of task $k$. Note that each element of $\mathbf{z}_{k,d}^{(t)}$, $\tilde{g}_{\langle k, i \rangle}^{(t)}[d]$, has zero mean and unit variance. Thus, $\rho_{\langle k, i \rangle,\langle k, i \rangle}^{(t)} = 1$, and $\rho_{\langle k, i \rangle,\langle k, j \rangle}^{(t)} \in [-1, 1]$, for $\forall k,i,j$. 
\begin{figure}[htbp]
    \centering
    \includegraphics[width=0.63\textwidth]{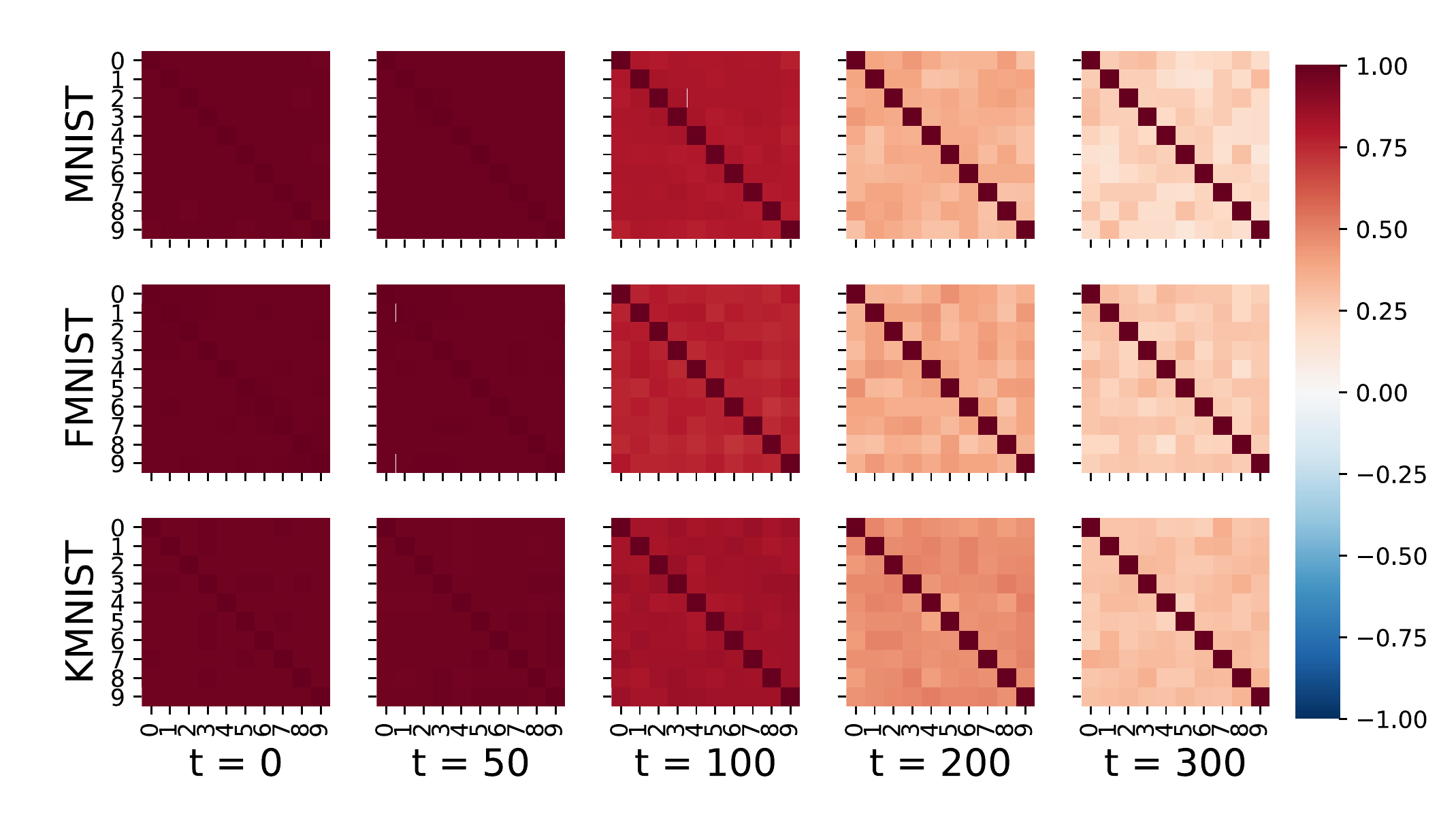}
    \caption{Experimental results of $\boldsymbol{\rho}_k^{(t)}$ versus communication rounds $t$. We train three LeNet\cite{lecun1998gradient}-based FL models, with $M_1 = M_2 = M_3 = 10$ and $20$ Monte Carlo trials. The learning rate is set to $\eta_1 = \eta_2 = \eta_3 = 0.002$, and the momentum is set to $0.9$. The loss function is the cross-entropy loss. Each training dataset is assigned $60000$ samples and the local training data is assigned $6000$ samples i.i.d. drawn from the dataset. The local updates have $5$ times of stochastic gradient descents (SGD), and the mini-batch size is set to be $1200$. The gradients uploading is error-free and all devices are selected. We approximate $\rho_{\langle k, i \rangle,\langle k, j \rangle}^{(t)}$ by $ \rho_{\langle k, i \rangle,\langle k, j \rangle}^{(t)} \approx \frac{1}{D} \sum_{d=1}^D \tilde{g}_{\langle k, i \rangle}^{(t)}[d] \tilde{g}_{\langle k, j \rangle}^{(t)}[d] $. }
    \label{fig:rho}
\end{figure}
The heatmaps in Fig.~\ref{fig:rho} illustrate the experimental results of $\boldsymbol{\rho}_k^{(t)}$ versus the communication round $t$ of three datasets, MNIST\cite{lecun2010mnist}, Fashion-MNIST\cite{xiao2017fashion}, and KMNIST\cite{clanuwat2018deep}, where the gradients are updated ideally without any transmission error. Intuitively, the darker the color of a pixel, the stronger the spatial correlation between the gradients from the corresponding two devices. 
From Fig.~\ref{fig:rho}, we see that the correlation grows substantially at the beginning of training, say, for communication round $t\leq 50$. When the training approaches convergence, most of the elements of $\boldsymbol{\rho}_k^{(t)}$ reduce to less than 0.3, which indicates that the local gradients have weaker cross-correlation when the models are close to convergence. 

\subsection{Convergence Analysis of OA-FMTL}
We start with the analysis of the loss function $F_k(\cdot)$ of task $k$ at each communication round. From [\citenum{friedlander2012hybrid}, Lemma~2.1], Assumptions~\ref{asp:f_Lipschitz}-\ref{asp:f_bound} lead to an upper bound of the loss function $F_k(\cdot)$ at the $t$-th round of the recursive updates in \eqref{SGD_error_updates}. 
We therefore have the following theorem.
\begin{theorem}
    \label{Theorem:MSE}
    Under Assumptions~\ref{asp:f_Lipschitz}-\ref{asp:g_task}, the expected loss function of each task $k$ at the $t$-th communication round is bounded by
    \begin{equation}
        \label{F_Upperbound}
        \begin{aligned}
            \mathbb{E}[F_k(\mathbf{w}_k^{(t+1)})]\!\leq\!
            \mathbb{E}[F_k(\mathbf{w}_k^{(t)})]\!- \frac{1}{ 2\omega_k}
            \left(\!\mathbb{E}[\|\nabla F_k(\mathbf{w}_k^{(t)})\|^{2}]  \!-\!2\mathbb{E} [ \|\mathbf{e}_{\mathrm{ds},k}^{(t)} \|^{2}]\!-\! 2\mathbb{E}[\|\mathbf{e}_{\mathrm{com},k}^{(t)}\|^{2}]\right),
        \end{aligned}
    \end{equation}
    where the device selection MSE is bounded by
    \begin{equation}
        \label{ErrorDevicesSelection}
        \mathbb{E}[\|\mathbf{e}_{\mathrm{ds},k}^{(t)} \|^{2}] \leq \frac{4}{Q_k^{2}}\Big(Q_k-\sum\nolimits_{i \in \mathcal{M}_k} Q_{\langle k, i \rangle}\Big)^{2}\left(\beta_1 + \beta_2 \mathbb{E}\left[ \|\nabla F_k(\mathbf{w}_k^{(t)}) \|^{2} \right] \right),
    \end{equation}
    and the communication MSE is given by
    \begin{align}
        \label{ErrorCommunication}
        \mathbb{E}[\|\mathbf{e}_{\mathrm{com},k}^{(t)}\|^{2}]
        = & \frac{1}{(\sum_{i \in \mathcal{M}_k} Q_{\langle k, i \rangle})^2} \sum\nolimits_{c=1}^{C}\!\!\bigg ( 
        \underbrace{\mathbb{E}
        \bigg[\Big|\sum\nolimits_{i \in \mathcal{M}_k}\!\Big(Q_{\langle k, i \rangle} \sqrt{v_k^{(t)}}\!-\!\zeta_k \mathbf{f}_k^\mathrm{H} \mathbf{H}_{\langle k, i \rangle}^{(t)} \mathbf{u}_{\langle k, i \rangle}\Big) r_{\langle k, i \rangle}^{(t)}[c]  \Big|^2\bigg]}_{\mathrm{the\;first\;term:\;the\;misalignment\;error}}
        \notag \\
        & + 
        \underbrace{\zeta_k^2 \sum\nolimits_{l \neq k} \mathbb{E} \Big[\Big|\sum\nolimits_{i \in \mathcal{M}_{l}} \mathbf{f}_k^\mathrm{H} \mathbf{H}_{\langle l, i \rangle}^{(t)} \mathbf{u}_{\langle l, i \rangle} r_{\langle l, i \rangle}^{(t)}[c] \Big|^{2}\Big]}_{\mathrm{the\;second\;term:\;the\;interference}} + 
        \underbrace{\zeta_k^2 \mathbb{E}\Big[\Big|\mathbf{f}_k^\mathrm{H} \mathbf{n}^{(t)}[c] \Big|^{2}\Big]}_{\mathrm{the\;third\;term:\;the\;noise}} 
        \bigg ).
    \end{align}
\end{theorem}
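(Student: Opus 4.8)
The plan is to start from the descent inequality for a single GD step with inexact gradient. By Assumption~\ref{asp:f_Lipschitz} (Lipschitz gradient), applying the standard quadratic upper bound to $F_k$ at $\mathbf{w}_k^{(t+1)} = \mathbf{w}_k^{(t)} - \eta_k \hat{\mathbf{g}}_k^{(t)}$ and writing $\hat{\mathbf{g}}_k^{(t)} = \nabla F_k(\mathbf{w}_k^{(t)}) - \mathbf{e}_k^{(t)}$ gives, after taking expectations,
\begin{equation}
    \mathbb{E}[F_k(\mathbf{w}_k^{(t+1)})] \le \mathbb{E}[F_k(\mathbf{w}_k^{(t)})] - \eta_k\, \mathbb{E}\big[\nabla F_k(\mathbf{w}_k^{(t)})^\mathrm{T}(\nabla F_k - \mathbf{e}_k^{(t)})\big] + \tfrac{\omega_k \eta_k^2}{2}\,\mathbb{E}[\|\nabla F_k - \mathbf{e}_k^{(t)}\|^2]. \notag
\end{equation}
Choosing $\eta_k = 1/\omega_k$ (the standard choice invoked via [\citenum{friedlander2012hybrid}, Lemma~2.1]) and expanding, one gets a bound of the form $\mathbb{E}[F_k(\mathbf{w}_k^{(t)})] - \frac{1}{2\omega_k}\big(\mathbb{E}[\|\nabla F_k\|^2] - \mathbb{E}[\|\mathbf{e}_k^{(t)}\|^2]\big)$. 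Then I would use $\mathbf{e}_k^{(t)} = \mathbf{e}_{\mathrm{ds},k}^{(t)} + \mathbf{e}_{\mathrm{com},k}^{(t)}$ from \eqref{ErrorTwoPart} together with the elementary inequality $\|a+b\|^2 \le 2\|a\|^2 + 2\|b\|^2$ to split the error term, yielding exactly \eqref{F_Upperbound}. This first part is essentially a citation to [\citenum{friedlander2012hybrid}, Lemma~2.1] plus the split, so it is routine.

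Next I would bound the device-selection MSE $\mathbb{E}[\|\mathbf{e}_{\mathrm{ds},k}^{(t)}\|^2]$. From \eqref{idealgradient} and the definition $\mathbf{e}_{\mathrm{ds},k}^{(t)} = \nabla F_k(\mathbf{w}_k^{(t)}) - \mathbf{g}_k^{(t)}$, I would write $\nabla F_k(\mathbf{w}_k^{(t)}) = \frac{1}{Q_k}\sum_{i\in[M_k]} Q_{\langle k,i\rangle}\mathbf{g}_{\langle k,i\rangle}^{(t)}$ using \eqref{LossFuncTaskk}–\eqref{GradDef}, and $\mathbf{g}_k^{(t)} = \frac{1}{\sum_{i\in\mathcal{M}_k}Q_{\langle k,i\rangle}}\sum_{i\in\mathcal{M}_k}Q_{\langle k,i\rangle}\mathbf{g}_{\langle k,i\rangle}^{(t)}$. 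Taking the difference, regrouping terms over $\mathcal{M}_k$ versus $[M_k]\setminus\mathcal{M}_k$, and applying Jensen/Cauchy–Schwarz to pull the sum of weighted gradients inside the norm, the coefficient $\big(Q_k - \sum_{i\in\mathcal{M}_k}Q_{\langle k,i\rangle}\big)^2/Q_k^2$ emerges; bounding $\|\mathbf{g}_{\langle k,i\rangle}^{(t)}\|^2 = \|\nabla F_{\langle k,i\rangle}(\mathbf{w}_k^{(t)})\|^2$ via Assumption~\ref{asp:f_bound} (again with a Jensen step to move the bound on $\|\nabla f_k\|^2$ past the average over samples) produces the factor $\beta_1 + \beta_2\mathbb{E}[\|\nabla F_k(\mathbf{w}_k^{(t)})\|^2]$, and the stray factor $4$ absorbs the constants from the repeated $\|a+b\|^2\le 2\|a\|^2+2\|b\|^2$ applications. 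This gives \eqref{ErrorDevicesSelection}.

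The communication MSE \eqref{ErrorCommunication} is the main obstacle and the part needing the most care. I would substitute the post-processed estimate \eqref{receivedsignal}–\eqref{receivedgradient} into $\mathbf{e}_{\mathrm{com},k}^{(t)} = \mathbf{g}_k^{(t)} - \hat{\mathbf{g}}_k^{(t)}$. The key observation is that the real/imaginary de-modulation in \eqref{modulation} and \eqref{receivedgradient} makes $\|\mathbf{e}_{\mathrm{com},k}^{(t)}\|^2$ over the $D$ real coordinates equal to $\sum_{c=1}^C |\cdot|^2$ over the $C$ complex coordinates of $\hat{\mathbf{r}}_k^{(t)} - (\sum_{i\in\mathcal{M}_k}Q_{\langle k,i\rangle})\,\mathbf{g}_k^{(t)}$-equivalent (after accounting for $\sqrt{v_k^{(t)}}$ from the normalization \eqref{nomalize} and the common-mean cancellation). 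Writing $\hat{r}_k^{(t)}[c]$ as desired-signal-plus-interference-plus-noise per \eqref{receivedsignal}, the error in coordinate $c$ is the sum of the misalignment term $\sum_{i\in\mathcal{M}_k}(Q_{\langle k,i\rangle}\sqrt{v_k^{(t)}} - \zeta_k\mathbf{f}_k^\mathrm{H}\mathbf{H}_{\langle k,i\rangle}^{(t)}\mathbf{u}_{\langle k,i\rangle})r_{\langle k,i\rangle}^{(t)}[c]$, the interference term $\zeta_k\sum_{l\neq k}\sum_{i\in\mathcal{M}_l}\mathbf{f}_k^\mathrm{H}\mathbf{H}_{\langle l,i\rangle}^{(t)}\mathbf{u}_{\langle l,i\rangle}r_{\langle l,i\rangle}^{(t)}[c]$, and the noise term $\zeta_k\mathbf{f}_k^\mathrm{H}\mathbf{n}^{(t)}[c]$. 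Taking $\mathbb{E}|\cdot|^2$, the three groups are mutually uncorrelated: the intra-task symbols $r_{\langle k,i\rangle}^{(t)}$ are independent across tasks of the interference symbols $r_{\langle l,i\rangle}^{(t)}$ (Assumption~\ref{asp:g_task}, independence of $\tilde{\mathbf{G}}_k$ across $k$) and both are independent of the AWGN $\mathbf{n}^{(t)}$, so the cross terms vanish and $\mathbb{E}|\text{error}|^2$ splits into exactly the three labeled terms in \eqref{ErrorCommunication}. The subtle points I would be careful with are: (i) keeping track of the scalar $\sqrt{v_k^{(t)}}$ so that the misalignment term carries $Q_{\langle k,i\rangle}\sqrt{v_k^{(t)}}$ rather than $Q_{\langle k,i\rangle}$, which uses the equal-variation clause $v_{\langle k,1\rangle}^{(t)}=\cdots=v_k^{(t)}$ of Assumption~\ref{asp:g_task}; (ii) noting that within the misalignment term the symbols $r_{\langle k,i\rangle}^{(t)}[c]$ for different $i$ are generally correlated (spatial correlation $\boldsymbol{\rho}_k^{(t)}$), so that term is genuinely a quadratic form in $\boldsymbol{\rho}_k^{(t)}$ and must be left in the $\mathbb{E}[|\sum_i(\cdot)r_{\langle k,i\rangle}^{(t)}[c]|^2]$ form rather than naively summed; and (iii) confirming $\mathbb{E}[|r_{\langle k,i\rangle}^{(t)}[c]|^2]=2$ from the normalization so that later expansions of the interference and noise terms are consistent. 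Since everything here is an exact identity (no inequalities), the work is bookkeeping: expand, invoke the independence structure to kill cross terms, and collect.
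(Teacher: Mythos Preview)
Your proposal is correct and follows essentially the same route as the paper: the paper likewise invokes [\citenum{friedlander2012hybrid}, Lemma~2.1] for the descent step with $\eta_k=1/\omega_k$, applies $\|a+b\|^2\le 2\|a\|^2+2\|b\|^2$ to split $\mathbf{e}_k^{(t)}$, cites [\citenum{friedlander2012hybrid}, \S3.2] for the device-selection bound \eqref{ErrorDevicesSelection}, and obtains \eqref{ErrorCommunication} by substituting \eqref{idealgradient}, \eqref{nomalize}, \eqref{modulation}, \eqref{receivedgradient} and using Assumption~\ref{asp:g_task} to kill the cross-task terms. The only cosmetic difference is order (the paper derives \eqref{ErrorCommunication} before \eqref{F_Upperbound}); your careful remarks on the role of $\sqrt{v_k^{(t)}}$ and the intra-task correlation are exactly the bookkeeping the paper leaves implicit.
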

\begin{proof}
    Please refer to Appendix \ref{Appendix:Prf_Theorem:MSE}.
\end{proof}
From Theorem~\ref{Theorem:MSE}, we obtain an upper bound of the loss function $F_k(\cdot)$ w.r.t. the device selection MSE and the communication MSE.
By inspection, the communication MSE in \eqref{ErrorCommunication} consists of three terms, 
where the first term represents the misalignment error of the aggregation gradients from the devices in task $k$, 
the second term represents the error caused by the interference from the devices associated with other tasks, 
and the third term represents the error caused by the channel noise.
Note that the expression in \eqref{ErrorCommunication} is convex w.r.t. $\zeta_k$ for any fixed device selection set $\mathcal{M}_k$ and beamforming $\mathbf{f}_k$ and $\{\mathbf{u}_{\langle k, i \rangle}\}_{i \in \mathcal{M}_k}$. Thus, we have the following corollary.
\begin{corollary}
    \label{Corollary:MSE}
    The optimal $\zeta_k$ is given by 
    \begin{equation}
        \label{opt_c}
        \zeta_k^* = \frac{ \sqrt{v_k^{(t)}} \sum_{i,j \in \mathcal{M}_k} \rho_{\langle k, i \rangle, \langle k, j \rangle}^{(t)} \left(Q_{\langle k, i \rangle} (\mathbf{f}_k^\mathrm{H} \mathbf{H}_{\langle k, j \rangle}^{(t)} \mathbf{u}_{\langle k, j \rangle})^\mathrm{H} + Q_{\langle k, j \rangle} \mathbf{f}_k^\mathrm{H} \mathbf{H}_{\langle k, i \rangle}^{(t)} \mathbf{u}_{\langle k, i \rangle}\right)  }{2 \left(\sum_{l=1}^K \sum_{i,j \in \mathcal{M}_l} \rho_{\langle l, i \rangle, \langle l, j \rangle}^{(t)} (\mathbf{f}_k^\mathrm{H} \mathbf{H}_{\langle l, i \rangle}^{(t)} \mathbf{u}_{\langle l, i \rangle})^\mathrm{H} \mathbf{f}_k^\mathrm{H} \mathbf{H}_{\langle l, j \rangle}^{(t)}\mathbf{u}_{\langle l, j \rangle} +\sigma^2  \| \mathbf{f}_k \|^2/2\right)}.
    \end{equation}
\end{corollary}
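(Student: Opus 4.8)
The plan is to treat \eqref{ErrorCommunication} as a function of the single real scalar $\zeta_k$ with the device selection $\mathcal{M}_k$ and the beamformers $\mathbf{f}_k,\{\mathbf{u}_{\langle k,i\rangle}\}_{i\in\mathcal{M}_k}$ held fixed, to show that it is a strictly convex quadratic in $\zeta_k$, and to read off its unique minimizer by setting the derivative to zero.

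First I would evaluate the expectations inside \eqref{ErrorCommunication} in closed form. By the modulation rule \eqref{modulation} together with Assumption~\ref{asp:g_task} — namely that the rows $\mathbf{z}_{l,d}^{(t)}$ of $\mathbf{\tilde{G}}_l^{(t)}$ are zero-mean and i.i.d.\ across $d$, and that the gradient matrices of distinct tasks are mutually independent — one obtains $\mathbb{E}\big[r_{\langle l,i\rangle}^{(t)}[c]\,(r_{\langle l',j\rangle}^{(t)}[c])^\dagger\big]=2\rho_{\langle l,i\rangle,\langle l,j\rangle}^{(t)}$ when $l=l'$ and $0$ otherwise (the real and imaginary halves of $r$ each contribute one copy of $\rho$, consistent with $\mathbb{E}[|r_{\langle k,i\rangle}^{(t)}[c]|^2]=2$), while $\mathbb{E}[|\mathbf{f}_k^\mathrm{H}\mathbf{n}^{(t)}[c]|^2]=\sigma^2\|\mathbf{f}_k\|^2$. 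A useful byproduct is that every cross-term between the misalignment sum, the interference sums, and the noise vanishes.

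Next I would substitute these into \eqref{ErrorCommunication}: expanding $|\,\cdot\,|^2$ of the misalignment term into a double sum over $i,j\in\mathcal{M}_k$ and collecting powers of $\zeta_k$, each summand of the sum over $c$ becomes $a_k\zeta_k^2-2b_k\zeta_k+\mathrm{const}$, where $a_k$ aggregates the $\zeta_k^2$-coefficients of all three terms, i.e.\ $a_k = 2\big(\sum_{l=1}^K\sum_{i,j\in\mathcal{M}_l}\rho_{\langle l,i\rangle,\langle l,j\rangle}^{(t)}(\mathbf{f}_k^\mathrm{H}\mathbf{H}_{\langle l,i\rangle}^{(t)}\mathbf{u}_{\langle l,i\rangle})^\mathrm{H}\mathbf{f}_k^\mathrm{H}\mathbf{H}_{\langle l,j\rangle}^{(t)}\mathbf{u}_{\langle l,j\rangle}+\sigma^2\|\mathbf{f}_k\|^2/2\big)$, and $b_k$ is precisely the numerator appearing in \eqref{opt_c}; here I would use the symmetry $\rho_{\langle l,i\rangle,\langle l,j\rangle}^{(t)}=\rho_{\langle l,j\rangle,\langle l,i\rangle}^{(t)}$ both to merge the two linear contributions arising from the cross terms of the expanded square and to check that $a_k,b_k\in\mathbb{R}$. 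Since none of the retained expectations depends on $c$, summing over $c$ and multiplying by $(\sum_{i\in\mathcal{M}_k}Q_{\langle k,i\rangle})^{-2}$ only rescales the quadratic and leaves its minimizer unchanged.

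Finally, since each auto-correlation matrix $\boldsymbol{\rho}_l^{(t)}$ is positive semidefinite (as a principal submatrix argument applied to \eqref{rho}), every quadratic form $\sum_{i,j\in\mathcal{M}_l}\rho_{\langle l,i\rangle,\langle l,j\rangle}^{(t)}(\cdot)^\mathrm{H}(\cdot)$ is nonnegative, so $a_k\ge\sigma^2\|\mathbf{f}_k\|^2=\sigma^2>0$ (recall $\|\mathbf{f}_k\|=1$); hence the objective is strictly convex in $\zeta_k$ and its unique minimizer is $\zeta_k^*=b_k/a_k$, which is exactly \eqref{opt_c}. The only delicate point is the second step — tracking the conjugations and the factors $2$ in the double sums so that the numerator and denominator match \eqref{opt_c} term by term; the convexity and the one-line optimization that follow are routine.
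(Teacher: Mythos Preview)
Your proposal is correct and follows essentially the same route as the paper: expand \eqref{ErrorCommunication} using $\mathbb{E}[r_{\langle l,i\rangle}[c]\,r_{\langle l,j\rangle}[c]^\dagger]=2\rho_{\langle l,i\rangle,\langle l,j\rangle}^{(t)}$ to obtain a real quadratic in $\zeta_k$, then read off the vertex. If anything, your argument is slightly more careful than the paper's, which simply asserts convexity in $\zeta_k$ without explicitly invoking the positive semidefiniteness of $\boldsymbol{\rho}_l^{(t)}$ or the strict positivity contributed by the noise term.
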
 
\begin{proof}
    Please refer to Appendix \ref{Appendix:Prf_Corollary:MSE}.
\end{proof}
\label{sec:zeta_select}
We emphasize that the design strategy of the weighting factors $\{\zeta_k\}_{k=1}^K$ in \eqref{opt_c} is different from that of the existing over-the-air FL approaches. For each task $k$, the first component in \eqref{ErrorCommunication} represents the misalignment error of the gradient aggregation. In the existing scheme, Refs.~\cite{liu2020reconfigurable, yang2020federated, shi2021over} force this component to zero, which leads to the constraints of $Q_{\langle k, i \rangle} \sqrt{v_k^{(t)}}-\zeta_k \mathbf{f}_k^\mathrm{H} \mathbf{H}_{\langle k, i \rangle}^{(t)} \mathbf{u}_{\langle k, i \rangle} = 0, \forall i \in \mathcal{M}_k$, and then minimize the rest of the communication MSE to determine $\zeta_k$. For all devices in the $k$-th task, to satisfy the constraints and the transmit power budgets in \eqref{transbf}, the choice of $\zeta_k$ is therefore given by $\zeta_k^2 = \max_{i \in \mathcal{M}_k} Q_{\langle k, i \rangle}^2 v_k^{(t)} / (P_0 \|\mathbf{f}_k^\mathrm{H} \mathbf{H}_{\langle k, i \rangle}^{(t)} \|^2)$. Thus, $\zeta_k$ is dominated by the device with the worst channel condition (in terms of $\|\mathbf{f}_k^\mathrm{H} \mathbf{H}_{\langle k, i \rangle}^{(t)}\|^2$). That is, the device with the worst channel becomes the bottleneck of the overall scheme, also known as the straggler problem. 
However, instead of zero-forcing, our approach tolerances the misalignment error but requires the design of $\zeta_k$ to directly minimize the overall communication MSE for task $k$. In this way, $\zeta_k$ is no longer solely determined by the worst channel condition, which significantly relieves the straggler problem. \revise{Note that our approach is also different from the approach in \cite{cao2021cooperative} since we aware the spatial correlation between the local gradients, which brings substantial improvement in learning performance.} Numerical results will be presented later in Section~V for verification.

In the former of this subsection, we obtain an upper bound of the loss function $F_k(\cdot)$ of task $k$ at the $t$-th communication round in Theorem~\ref{Theorem:MSE}. In the following, we analyse the convergence performance of the entire OA-FMTL model at the $t$-th round, and obtain an upper bound of the average difference between the overall loss function at the $(t+1)$-th round $\mathcal{F}\left(\mathbf{w}^{(t+1)}\right)$ and the optimal $\mathcal{F}\left(\mathbf{w}^*\right)$, i.e., $\mathbb{E}\left[\mathcal{F}\left(\mathbf{w}^{(t+1)}\right)- \mathcal{F}\left(\mathbf{w}^*\right)\right]$.
Based on Theorem~\ref{Theorem:MSE} and Corollary~\ref{Corollary:MSE}, we bring the device selection MSE and the communication MSE into an analysis framework, and obtain the following theorem.

\begin{theorem}
    \label{Theorem:Convergence}
    Based on Assumptions~\ref{asp:f_Lipschitz}-\ref{asp:f_bound}, the overall loss function at the $t$-th round $\mathcal{F}(\mathbf{w}^{(t+1)})$ satisfies the following inequality:
    \begin{align}
        \label{MSERecursionUpperbound}
        \mathbb{E}[\mathcal{F}(\mathbf{w}^{(t+1)}) \!-\! \mathcal{F}(\mathbf{w}^*)]
        & \leq \mathbb{E}[\mathcal{F}(\mathbf{w}^{(t)})\!-\! \mathcal{F}(\mathbf{w}^*)]
        \Big (1 - \frac{\mu}{\omega} \Big ) \notag\\
        & \quad + \Big (\frac{2\mu\beta_2}{\omega}\mathbb{E}[\mathcal{F}(\mathbf{w}^{(t)})\!-\! \mathcal{F}(\mathbf{w}^*)] + \frac{\beta_1}{\omega} \Big )
        \mathcal{E}^{(t)}(\mathcal{M}, \mathbf{f}, \mathbf{u}),
    \end{align}
    where $\omega \triangleq \max_k \omega_k$, $\mu \triangleq \min_k \mu_k$, 
    $\mathcal{E}^{(t)} (\mathcal{M}, \mathbf{f}, \mathbf{u}) \triangleq \sum_{k=1}^K d_k^{(t)}(\mathcal{M}_k, \mathbf{f}_k, \mathbf{u}_k)$, 
    and $d_k^{(t)} (\mathcal{M}_k, \mathbf{f}_k, \mathbf{u}_k)$ is denoted by
    \begin{align}
        d_k^{(t)} & (\mathcal{M}_k, \mathbf{f}_k, \mathbf{u}_k)
        \triangleq 
        \frac{4}{Q_k^2}\!\left(Q_k\!-\!\sum_{i \in \mathcal{M}_k} Q_{\langle k, i \rangle}\right)^{2}\!+\! \frac{1}{\left(\sum_{i \in \mathcal{M}_k} Q_{\langle k, i \rangle}\right)^2} \left ( \sum_{i,j \in \mathcal{M}_k} \rho_{\langle k, i \rangle, \langle k, j \rangle}^{(t)}  Q_{\langle k, i \rangle}Q_{\langle k, j \rangle} \right. \notag\\
        & \left.- \frac{ \left( \sum_{i,j \in \mathcal{M}_k} \rho_{\langle k, i \rangle, \langle k, j \rangle}^{(t)} \left(Q_{\langle k, i \rangle} (\mathbf{f}_k^\mathrm{H} \mathbf{H}_{\langle k, j \rangle}^{(t)} \mathbf{u}_{\langle k, j \rangle})^\mathrm{H} + Q_{\langle k, j \rangle} \mathbf{f}_k^\mathrm{H} \mathbf{H}_{\langle k, i \rangle}^{(t)} \mathbf{u}_{\langle k, i \rangle}\right) \right) ^2 }{4 \left(\sum_{l=1}^K \sum_{i,j \in \mathcal{M}_{l}} \rho_{\langle l, i \rangle, \langle l, j \rangle}^{(t)} (\mathbf{f}_k^\mathrm{H} \mathbf{H}_{\langle l, i \rangle}^{(t)} \mathbf{u}_{\langle l, i \rangle})^\mathrm{H} \mathbf{f}_k^\mathrm{H} \mathbf{H}_{\langle l, j \rangle}^{(t)} \mathbf{u}_{\langle l, j \rangle} + \sigma^2 \| \mathbf{f}_k \|^2 /2 \right) } \right ).
        \label{d}
    \end{align}
\end{theorem}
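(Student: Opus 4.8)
The plan is to establish a one-round descent inequality separately for each task $k$ and then add the $K$ bounds together. This is natural because $\mathcal{F}(\mathbf{w})=\sum_{k=1}^{K}F_k(\mathbf{w}_k)$ decouples over the blocks $\mathbf{w}_k$, so the minimizer $\mathbf{w}^{*}$ of $\mathcal{F}$ has blocks $\mathbf{w}_k^{*}$ that individually minimize $F_k$, $\mathcal{F}(\mathbf{w}^{*})=\sum_k F_k(\mathbf{w}_k^{*})$, and every term $\mathbb{E}[F_k(\mathbf{w}_k^{(t)})-F_k(\mathbf{w}_k^{*})]$ is nonnegative. First, I would invoke Theorem~\ref{Theorem:MSE} with the common constant $\omega=\max_k\omega_k$ in place of $\omega_k$ (each $F_k$ is also $\omega$-smooth, so [\citenum{friedlander2012hybrid}, Lemma~2.1] still applies, now with step size $1/\omega$); expanding \eqref{F_Upperbound} and subtracting $F_k(\mathbf{w}_k^{*})$ gives, for every $k$,
\begin{equation}
\mathbb{E}[F_k(\mathbf{w}_k^{(t+1)})-F_k(\mathbf{w}_k^{*})]\le \mathbb{E}[F_k(\mathbf{w}_k^{(t)})-F_k(\mathbf{w}_k^{*})]-\frac{1}{2\omega}\mathbb{E}[\|\nabla F_k(\mathbf{w}_k^{(t)})\|^{2}]+\frac{1}{\omega}\big(\mathbb{E}[\|\mathbf{e}_{\mathrm{ds},k}^{(t)}\|^{2}]+\mathbb{E}[\|\mathbf{e}_{\mathrm{com},k}^{(t)}\|^{2}]\big).
\end{equation}

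Next, I would bound the error bracket by $d_k^{(t)}$ times $\beta_1+\beta_2\mathbb{E}[\|\nabla F_k(\mathbf{w}_k^{(t)})\|^{2}]$. The device-selection part is given verbatim by \eqref{ErrorDevicesSelection}, which is exactly the first summand of \eqref{d} scaled by that factor. For the communication part I would substitute the optimal weighting $\zeta_k^{*}$ of Corollary~\ref{Corollary:MSE} into \eqref{ErrorCommunication}: since \eqref{ErrorCommunication} is a convex quadratic in $\zeta_k$, substituting its minimizer $\zeta_k^{*}$ leaves the minimum value, which has the ``constant minus linear-squared over quadratic'' shape of the bracket in the second summand of \eqref{d}. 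Evaluating the three expectations under the gradient/correlation model of Assumption~\ref{asp:g_task} — using $\mathbb{E}[r_{\langle k,i\rangle}^{(t)}[c](r_{\langle k,j\rangle}^{(t)}[c])^{\dagger}]=2\rho_{\langle k,i\rangle,\langle k,j\rangle}^{(t)}$, the vanishing of inter-task correlations, $\mathbb{E}[\mathbf{n}^{(t)}[c](\mathbf{n}^{(t)}[c])^{\mathrm{H}}]=\sigma^{2}\mathbf{I}_{N_{\mathrm{R}}}$, $v_{\langle k,i\rangle}^{(t)}=v_k^{(t)}$, and $\sum_{c=1}^{C}(\cdot)=\frac{D}{2}(\cdot)$ — one finds this minimum value equals $D\,v_k^{(t)}$ times the second summand of \eqref{d}. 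Then $v_k^{(t)}$ is itself bounded: $D\,v_{\langle k,i\rangle}^{(t)}\le\|\mathbf{g}_{\langle k,i\rangle}^{(t)}\|^{2}$ by \eqref{nomalize}, and Jensen's inequality on \eqref{GradDef} with Assumption~\ref{asp:f_bound} gives $\|\mathbf{g}_{\langle k,i\rangle}^{(t)}\|^{2}\le\beta_1+\beta_2\|\nabla F_k(\mathbf{w}_k^{(t)})\|^{2}$, so $v_k^{(t)}\le\frac{1}{D}(\beta_1+\beta_2\|\nabla F_k(\mathbf{w}_k^{(t)})\|^{2})$ and the factors $D$ cancel. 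Taking expectations and adding the two parts yields
\begin{equation}
\mathbb{E}[\|\mathbf{e}_{\mathrm{ds},k}^{(t)}\|^{2}]+\mathbb{E}[\|\mathbf{e}_{\mathrm{com},k}^{(t)}\|^{2}]\le\big(\beta_1+\beta_2\,\mathbb{E}[\|\nabla F_k(\mathbf{w}_k^{(t)})\|^{2}]\big)\,d_k^{(t)}(\mathcal{M}_k,\mathbf{f}_k,\mathbf{u}_k).
\end{equation}

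Then, substituting this into the first inequality and grouping the $\|\nabla F_k(\mathbf{w}_k^{(t)})\|^{2}$ terms leaves the coefficient $\frac{1}{\omega}(\frac{1}{2}-\beta_2 d_k^{(t)})$ on $-\mathbb{E}[\|\nabla F_k(\mathbf{w}_k^{(t)})\|^{2}]$; in the operating regime $\beta_2 d_k^{(t)}\le\frac{1}{2}$ this is nonnegative, so I can eliminate the gradient term using the Polyak--\L ojasiewicz inequality implied by strong convexity (Assumption~\ref{asp:f_convex}), $\|\nabla F_k(\mathbf{w})\|^{2}\ge 2\mu_k(F_k(\mathbf{w})-F_k(\mathbf{w}^{*}))\ge 2\mu(F_k(\mathbf{w})-F_k(\mathbf{w}^{*}))$ with $\mu=\min_k\mu_k$. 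Regrouping produces the per-task recursion $\mathbb{E}[F_k(\mathbf{w}_k^{(t+1)})-F_k(\mathbf{w}_k^{*})]\le(1-\frac{\mu}{\omega})\mathbb{E}[F_k(\mathbf{w}_k^{(t)})-F_k(\mathbf{w}_k^{*})]+\big(\frac{2\mu\beta_2}{\omega}\mathbb{E}[F_k(\mathbf{w}_k^{(t)})-F_k(\mathbf{w}_k^{*})]+\frac{\beta_1}{\omega}\big)d_k^{(t)}(\mathcal{M}_k,\mathbf{f}_k,\mathbf{u}_k)$. Summing over $k\in[K]$, using $\mathcal{F}(\mathbf{w}^{(t)})-\mathcal{F}(\mathbf{w}^{*})=\sum_k(F_k(\mathbf{w}_k^{(t)})-F_k(\mathbf{w}_k^{*}))$ and the elementary bound $\sum_k a_k d_k^{(t)}\le\big(\sum_k a_k\big)\big(\sum_k d_k^{(t)}\big)$ for the nonnegative scalars $a_k=\mathbb{E}[F_k(\mathbf{w}_k^{(t)})-F_k(\mathbf{w}_k^{*})]$ and $d_k^{(t)}\ge0$, collapses the right-hand side into \eqref{MSERecursionUpperbound} with $\mathcal{E}^{(t)}(\mathcal{M},\mathbf{f},\mathbf{u})=\sum_{k}d_k^{(t)}(\mathcal{M}_k,\mathbf{f}_k,\mathbf{u}_k)$.

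The hard part is the middle step: minimizing \eqref{ErrorCommunication} over $\zeta_k$ and re-arranging the correlation-weighted quadratic forms so that the result reproduces the second summand of \eqref{d} with exactly the constant $1$. This hinges on the inter-task correlations dropping out under Assumption~\ref{asp:g_task}, on the factor $2$ from the complex modulation in \eqref{modulation}, on $\sum_{c=1}^{C}$ contributing a factor $D/2$, and on the estimate $v_k^{(t)}\le\frac{1}{D}(\beta_1+\beta_2\|\nabla F_k(\mathbf{w}_k^{(t)})\|^{2})$ cancelling the residual $D$ so that the device-selection and communication contributions fuse into a single $\big(\beta_1+\beta_2\mathbb{E}[\|\nabla F_k(\mathbf{w}_k^{(t)})\|^{2}]\big)d_k^{(t)}$ term. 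By contrast, the strong-convexity substitution and the task-wise summation at the end are routine.
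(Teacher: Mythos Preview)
Your proposal is correct and follows essentially the same route as the paper's proof: start from \eqref{F_Upperbound} with the uniform constant $\omega$, substitute the device-selection bound \eqref{ErrorDevicesSelection} and the communication MSE minimized over $\zeta_k$ (Corollary~\ref{Corollary:MSE}), control the residual factor $2Cv_k^{(t)}=Dv_k^{(t)}$ by $\beta_1+\beta_2\|\nabla F_k(\mathbf{w}_k^{(t)})\|^{2}$ via the variance/Jensen argument, apply the PL inequality $\|\nabla F_k\|^{2}\ge 2\mu(F_k-F_k^{*})$, and sum over $k$. In fact you are more explicit than the paper on two points it leaves implicit: the sign condition $\beta_2 d_k^{(t)}\le \tfrac12$ needed for the PL substitution to preserve the inequality, and the decoupling step $\sum_k a_k d_k^{(t)}\le\big(\sum_k a_k\big)\big(\sum_k d_k^{(t)}\big)$ required to pass from the per-task terms to $\mathbb{E}[\mathcal{F}(\mathbf{w}^{(t)})-\mathcal{F}(\mathbf{w}^{*})]\cdot\mathcal{E}^{(t)}$.
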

\begin{proof}
    Please refer to Appendix \ref{Appendix:Prf_Theorem:Convergence}.
\end{proof}


\noindent Theorem~\ref{Theorem:Convergence} provides a metric for evaluating the learning performance of the OA-FMTL model. In the next section, we formulate the optimization problem based on this metric, and propose an efficient algorithm to solve this problem.

\section{System Optimization}
To achieve better learning performance, we aim to minimize the upper bound of $\mathbb{E}[\mathcal{F} \left(\mathbf{w}^{(t+1)}\right) -  \mathcal{F}\left(\mathbf{w}^*\right)]$ in \eqref{MSERecursionUpperbound}, or equivalently, to minimize $\mathcal{E}^{(t)}(\mathcal{M}, \mathbf{f}, \mathbf{u})$ over device selection set $\mathcal{M}$, receive beamforming $\mathbf{f}$ and transmit beamforming $\mathbf{u}$, as detailed below. 

\subsection{Problem Formulation}
\revise{ From Theorem \ref{Theorem:Convergence}, the gap $\mathbb{E}[\mathcal{F}(\mathbf{w}^{(t+1)}) \!-\! \mathcal{F}(\mathbf{w}^*)]$ has an upper bound in \eqref{MSERecursionUpperbound}. The upper bound is monotonically increasing w.r.t. $\mathcal{E}^{(t)}(\mathcal{M}, \mathbf{f}, \mathbf{u})$ since $\frac{2\mu\beta_2}{\omega}\mathbb{E}[\mathcal{F}(\mathbf{w}^{(t)})\!-\! \mathcal{F}(\mathbf{w}^*)] + \frac{\beta_1}{\omega} > 0$. With the target to minimize the gap $\mathbb{E}[\mathcal{F}(\mathbf{w}^{(t+1)}) \!-\! \mathcal{F}(\mathbf{w}^*)]$, we minimize $\mathcal{E}^{(t)}(\mathcal{M}, \mathbf{f}, \mathbf{u})$ at round $t$, and we formulate the optimization problem P1 as:}
\begin{mini!}|s|
    {\substack{\mathcal{M}, \mathbf{f}, \mathbf{u}}}
    {\mathcal{E}^{(t)}(\mathcal{M}, \mathbf{f}, \mathbf{u}) = \sum\nolimits_{k=1}^K d_k^{(t)}(\mathcal{M}_k, \mathbf{f}_k, \mathbf{u}_k) }
    {}
    {\text{(P1)}:}
    \addConstraint{\mathcal{M}_k \subset  [M_k], k \in [K] 
        \label{constraint_x}}
    \addConstraint{\left\| \mathbf{f}_k \right\| = 1, k \in [K]
        \label{constraint_f}}
    \addConstraint{\left\| \mathbf{u}_{\langle k, i \rangle} \right\|^2 \leq P_0/2, k \in [K], i \in [M_k],                         \label{constraint_u}}
\end{mini!}
where 
\begin{align}
    d_k^{(t)}(\mathcal{M}_k, \mathbf{f}_k, \mathbf{u}_k) =
    & \frac{4}{Q_k^2}\!\left(Q_k\!-\!\sum_{i \in \mathcal{M}_k} Q_{\langle k, i \rangle}\right)^{2}
    \!+\!\frac{\sum_{i,j \in \mathcal{M}_k}\!\rho_{\langle k, i \rangle, \langle k, j \rangle}^{(t)}  Q_{\langle k, i \rangle}Q_{\langle k, j \rangle}\!-\!\frac{ a_k^{(t)}(\mathcal{M}_k, \mathbf{f}_k, \mathbf{u}_k) ^2 }{4 b_k^{(t)}(\mathcal{M}_k, \mathbf{f}_k, \mathbf{u}_k)} }{\left(\sum_{i \in \mathcal{M}_k} Q_{\langle k, i \rangle}\right)^2}\!,
    \label{d_k_p1}
\end{align}
with
\begin{subequations}
    \begin{align}
        & a_k^{(t)} (\mathcal{M}_k,\!\mathbf{f}_k,\!\mathbf{u}_k)\!\triangleq\! \sum\nolimits_{i,j \in \mathcal{M}_k} \rho_{\langle k, i \rangle, \langle k, j \rangle}^{(t)} \left(Q_{\langle k, i \rangle} (\mathbf{f}_k^\mathrm{H} \mathbf{H}_{\langle k, j \rangle}^{(t)} \mathbf{u}_{\langle k, j \rangle})^\mathrm{H} \!+\!Q_{\langle k, j \rangle} \mathbf{f}_k^\mathrm{H} \mathbf{H}_{\langle k, i \rangle}^{(t)} \mathbf{u}_{\langle k, i \rangle}\right),\!\\
        & b_k^{(t)} (\mathcal{M}_k,\!\mathbf{f}_k,\!\mathbf{u}_k)\!\triangleq\! \sum\nolimits_{l=1}^K \sum\nolimits_{i,j \in \mathcal{M}_l} \rho_{\langle l, i \rangle, \langle l, j \rangle}^{(t)} (\mathbf{f}_k^\mathrm{H} \mathbf{H}_{\langle l, i \rangle}^{(t)} \mathbf{u}_{\langle l, i \rangle})^\mathrm{H} \mathbf{f}_k^\mathrm{H} \mathbf{H}_{\langle l, j \rangle}^{(t)} \mathbf{u}_{\langle l, j \rangle} + \sigma^2 \| \mathbf{f}_k \|^2 /2.\!
    \end{align}
\end{subequations}
P1 is an optimization problem w.r.t. device selection set $\mathcal{M}$, receive beamforming vectors $\mathbf{f}$ and transmit beamforming vectors $\mathbf{u}$, respectively. 

We next design an AO-based algorithm to solve the optimization problem P1. P1 contains $2M + K$ optimization variables, i.e., $M$ device selection indices, $K$ receive beamforming vectors at the PS and $M$ transmit beamforming vectors at the  devices. P1 is non-convex due to the coupling of $\mathcal{M}$, $\mathbf{f}$ and $\mathbf{u}$. Thus, we adopt the AO framework to solve the problem in a suboptimal fashion. 
First, we optimize the beamforming vectors with fixed device selection set $\mathcal{M}$. Second, with fixed beamforming vectors, we optimize device selection set $\mathcal{M}$ with Gibbs sampling\cite{liu2020reconfigurable}. The two steps iterate until convergence. The details are discussed in the following subsections.

\subsection{ Optimization of \texorpdfstring{$\mathbf{f}$}{Lg} and \texorpdfstring{$\mathbf{u}$}{Lg} with fixed \texorpdfstring{$\mathcal{M}$}{Lg}} 
We first optimize beamforming vectors $\mathbf{f}$ and $\mathbf{u}$ with fixed device selection set $\mathcal{M}$. 
\revise{ By inspection of (P1), $d_k^{(t)}(\mathcal{M}_k, \mathbf{f}_k, \mathbf{u}_k)$ in \eqref{d_k_p1} is invariant to the value of $\|\mathbf{f}_k\|$. Therefore, the unit-length constraint of $\mathbf{f}_k$ in \eqref{constraint_f} can be ignored without changing the minimum of (P1). } Further, we drop the constant terms in the objective function $\mathcal{E}^{(t)}(\mathcal{M}, \mathbf{f}, \mathbf{u})$ to obtain 
\begin{equation}
    \min_{\{\mathbf{f}_k\}, \mathbf{u}}
    \quad -\sum_{k = 1}^K \frac{ a_k^{(t)}(\mathcal{M}_k, \mathbf{f}_k, \mathbf{u}_k)^2 }{4 \left(\sum_{i \in \mathcal{M}_k} Q_{\langle k, i \rangle}\right)^2  b_k^{(t)}(\mathcal{M}_k, \mathbf{f}_k, \mathbf{u}_k)}, 
    \; \mathrm{s.t.} \quad \eqref{constraint_u}.
    \label{P2obj}
\end{equation}
\eqref{P2obj} is still non-convex because both the numerator and the denominator in the $k$-th summand contain the optimization variables $\mathbf{f}_k$ and $\mathbf{u}_k$. We adopt the quadratic transform in fractional programming (FP) \cite{shen2018fractional} to decouple the numerator and the denominator as
\begin{equation}
    \min_{\{\mathbf{f}_k\}, \mathbf{u}, \mathbf{y}}
    \quad -\sum_{k = 1}^K  \bigg( \frac{y_k a_k^{(t)}(\mathcal{M}_k, \mathbf{f}_k, \mathbf{u}_k)}{ \sum_{i \in \mathcal{M}_k} Q_{\langle k, i \rangle} } - y_k^2 b_k^{(t)}(\mathcal{M}_k, \mathbf{f}_k, \mathbf{u}_k) \bigg), 
    \; \mathrm{s.t.} \quad \eqref{constraint_u}.
    \label{P3obj}
\end{equation}
where $\mathbf{y} = [y_{1}, \cdots, y_k]^\mathrm{T} \in \mathbb{R}^K$ is an auxiliary vector introduced by FP. Note that \eqref{P3obj} reduces to \eqref{P2obj} by letting each $y_k$ take its optimal form as
\begin{equation}
    \label{y_define}
    y_k = \frac{a_k^{(t)}(\mathcal{M}_k, \mathbf{f}_k, \mathbf{u}_k)}{2 \left(\sum_{i \in \mathcal{M}_k} Q_{\langle k, i \rangle}\right) b_k^{(t)}(\mathcal{M}_k, \mathbf{f}_k, \mathbf{u}_k) }.
\end{equation}

\subsubsection{Optimizing \texorpdfstring{$\mathbf{u}_{\langle k, i \rangle}$}{Lg} with fixed \texorpdfstring{$\{\mathbf{u}_{\langle k, i \rangle}\}_{j\neq i}$}{Lg} and \texorpdfstring{$\{\mathbf{f}_k\}$}{Lg}}
\label{opt_u}
With fixed $\{\mathbf{u}_{\langle k, i \rangle}\}_{j\neq i}$ and $\{\mathbf{f}_k\}$, when $i\!\notin\!\mathcal{M}_k$, i.e., the device $\langle k, i \rangle$ is not selected, we obtain $\mathbf{u}_{\langle k, i \rangle} = \mathbf{0}_{N_{\mathrm{T}} \times 1} $ directly. When $i\!\in\!\mathcal{M}_k$, i.e., the device $\langle k, i \rangle$ is selected, \eqref{P3obj} reduces to 
\begin{equation*}
    \text{(P2)}: \quad \min_{\mathbf{u}_{\langle k, i \rangle}}
    \quad \mathbf{u}_{\langle k, i \rangle}^\mathrm{H} \mathbf{A}_{\langle k, i \rangle}^{(t)} \mathbf{u}_{\langle k, i \rangle}  -2 \Re \left\{ \mathbf{b}_{\langle k, i \rangle}^{(t)}{}^\mathrm{H} \mathbf{u}_{\langle k, i \rangle} \right \}, 
    \; \mathrm{s.t.} \quad \eqref{constraint_u}.
\end{equation*}
where $\mathbf{A}_{\langle k, i \rangle}^{(t)} \in \mathbb{C}^{N_{\mathrm{T}} \times N_{\mathrm{T}}}$ and $\mathbf{b}_{\langle k, i \rangle}^{(t)} \in \mathbb{C}^{N_{\mathrm{T}}}$ are defined by
\begin{subequations}
    \label{Ab_device}
    \begin{align}
        \mathbf{A}_{\langle k, i \rangle}^{(t)}
        &\triangleq \sum\nolimits_{l = 1}^K y_{l}^2 (\mathbf{H}_{\langle k, i \rangle}^{(t)})^\mathrm{H} \mathbf{f}_l \mathbf{f}_l^\mathrm{H} \mathbf{H}_{\langle k, i \rangle}^{(t)},\\
        \mathbf{b}_{\langle k, i \rangle}^{(t)} 
        &\!\triangleq\!\frac{y_k\!\sum_{j\!\in \mathcal{M}_k}\!\rho_{\langle k, i \rangle, \langle k, j \rangle}^{(t)}\!Q_{\langle k, j \rangle}\! (\!\mathbf{f}_k^\mathrm{H}\!\mathbf{H}_{\langle k, i \rangle}^{(t)}\!)^\mathrm{H} }{\sum_{i \in \mathcal{M}_{\!k}} Q_{\langle k, i \rangle} }
        \!-\!\sum_{\!l = 1}^K\!y_{l}^2\!\sum_{\!j\!\in \mathcal{M}_{\!k}, j\!\neq\!i} \!\rho_{\langle\!k, i \rangle,\!\langle\!k, j \rangle}^{(t)} 
        (\!\mathbf{f}_l^\mathrm{H}\!\mathbf{H}_{\!\langle\!k, i \rangle}^{(t)}\!)^\mathrm{H} 
        \mathbf{f}_l^\mathrm{H}\!\mathbf{H}_{\!\langle\!k, j \rangle}^{(\!t\!)}\!\mathbf{u}_{\langle\!k, j \rangle}.\!
    \end{align}
\end{subequations}
Note that $\mathbf{A}_{\langle k, i \rangle}^{(t)}$ is a positive semidefinite matrix, and that the constraint \eqref{constraint_u} is convex w.r.t. $\mathbf{u}_{\langle k, i \rangle}$. Thus, P2 is a convex quadratically constrained quadratic programming (QCQP) problem w.r.t. $\mathbf{u}_{\langle k, i \rangle}$, which can be solved by standard convex optimization tools.

\subsubsection{Optimizing \texorpdfstring{$\mathbf{f}_k$}{Lg} with fixed \texorpdfstring{$\mathbf{u}$}{Lg}}
\label{opt_f}
Similarly to \ref{opt_u}, with fixed $\mathbf{u}$, \eqref{P3obj} reduces to 
\begin{mini*}|s|
    {\substack{\mathbf{f}_k}}
    {\mathbf{f}_k^\mathrm{H} \mathbf{A}_k^{(t)} \mathbf{f}_k - 2\Re \{ \mathbf{b}_k^{(t)}{}^\mathrm{H} \mathbf{f}_k\}}
    {}
    {\text{(P3)}:}
\end{mini*}
where $\mathbf{A}_k^{(t)} \in \mathbb{C}^{N_{\mathrm{R}} \times N_{\mathrm{R}}}$ and $\mathbf{b}_k^{(t)} \in \mathbb{C}^{N_{\mathrm{R}}}$ are denoted by
\begin{subequations}
    \label{Ab_PS}
    \begin{align}
        \mathbf{A}_k^{(t)} 
        &\triangleq y_k^2 \sum\nolimits_{l=1}^K \sum\nolimits_{i,j \in \mathcal{M}_l} \rho_{\langle l, i \rangle, \langle l, j \rangle}^{(t)}  \mathbf{H}_{\langle l, i \rangle}^{(t)} \mathbf{u}_{\langle l, i \rangle} (\mathbf{H}_{\langle l, j \rangle}^{(t)} \mathbf{u}_{\langle l, j \rangle})^\mathrm{H} + (y_k^2 \sigma^2/2) \mathbf{I}_{N_{\mathrm{R}}}\\
        \mathbf{b}_k^{(t)} 
        &\triangleq \frac{y_k }{ \sum_{i \in \mathcal{M}_k} Q_{\langle k, i \rangle} }  \sum_{i,j \in \mathcal{M}_k} \rho_{\langle k, i \rangle, \langle k, j \rangle}^{(t)} Q_{\langle k, j \rangle} \mathbf{H}_{\langle k, i \rangle}^{(t)} \mathbf{u}_{\langle k, i \rangle}.
    \end{align}
\end{subequations}
P3 is convex w.r.t. $\mathbf{f}_k$ by noting the positive semidefinite matrix $\mathbf{A}_k^{(t)}$. \revise{ Note that $\mathbf{f}_k$ here is obtained by dropping the unit-length constraint in \eqref{constraint_f}. Thus, we finally obtain the optimal $\mathbf{f}_k$ by scaling the obtained $\mathbf{f}_k$ to a unit vector.} We summarize the optimization of $\mathbf{f}$ and $\mathbf{u}$ in Algorithm \ref{alg:beamforming}.
\begin{algorithm}[htb]
\caption{ AO Algorithm to Optimize $\mathbf{f}$ and $\mathbf{u}$} 
\label{alg:beamforming} 
\begin{algorithmic}[1] 
\REQUIRE $\mathcal{M}, \{\boldsymbol{\rho}_k^{(t)}, \mathbf{H}_{k,\langle l, i \rangle}^{(t)}, Q_{\langle k, i \rangle} | k,l \in [K], i \in [M_k]\},$ and $I_{\max}$. 
\STATE {\textbf{Initialization:}} $\mathbf{y}$, $\mathbf{f}$ and $\mathbf{u}$.
\FOR{ $\tau \in [I_{\max}]$ }
    \FOR{$k \in [K]$ }
        \FOR{$i \in \mathcal{M}_k$ }
            \STATE Compute $\mathbf{A}_{\langle k, i \rangle}^{(t)}, \mathbf{b}_{\langle k, i \rangle}^{(t)}$ based on \eqref{Ab_device} ; 
            \STATE Optimize $\mathbf{u}_{\langle k, i \rangle}$ by solving (P2);
        \ENDFOR
        \STATE Compute $\mathbf{A}_k^{(t)}, \mathbf{b}_k^{(t)}$ based on \eqref{Ab_device};
        \STATE Optimize $\mathbf{f}_k$ by solving (P3), update $\mathbf{f}_k$ (by scaling to a unit vector) 
        \STATE Updates $\mathbf{y}$ based on \eqref{y_define};
    \ENDFOR
\ENDFOR
\ENSURE $(\mathbf{f}, \mathbf{u})$.
\end{algorithmic}
\end{algorithm}

\subsection{Optimizing \texorpdfstring{$\mathcal{M}$}{Lg} with fixed \texorpdfstring{$\mathbf{f}$}{Lg} and \texorpdfstring{$\mathbf{u}$}{Lg} } 
In this subsection, we optimize device selection set $\mathcal{M}$ with fixed beamforming $\mathbf{f}$ and $\mathbf{u}$. For notational convenience, we use a binary indication vector $\mathbf{s}$ to represent the device selection set $\mathcal{M}$, i.e., $\mathbf{s} \triangleq [\mathbf{s}_{1}^\mathrm{T}, \cdots, \mathbf{s}_k^\mathrm{T}]^\mathrm{T} \in \{0,1\}^M$, where $\mathbf{s}_k \in \{0,1\}^{M_k}$ with the $i$-th entry $s_{\langle k, i \rangle} = 1$ meaning that the $\langle k, i \rangle$-th device is selected and $s_{\langle k, i \rangle} = 0$ otherwise. Note that $\mathcal{M}$ and $\mathbf{s}$ can be interchanged.
Since $\mathbf{s}$ is discrete, we introduce the Gibbs sampling to optimize $\mathbf{s}$ by following the approach in \cite{liu2020reconfigurable}. 
We denote $\mathbf{s}^{\text{old}}$ as the sampling device selection solution obtained from the proceeding sampling round. At the current sampling round, the sampling set $\mathcal{S}$ is generated from $\mathbf{s}^{\text{old}}$, given by $\mathcal{S} \triangleq \{\mathbf{s}^{\text{old}}\} \cup \{ \mathbf{s}^{\text{old}}_{\langle k, i \rangle} | k \in [K], i \in [M_k]\}$, where $\mathbf{s}^{\text{old}}_{\langle k, i \rangle}$ denotes the indication vector that differs from $\mathbf{s}^{\text{old}}$ only at the $\langle k, i \rangle$-th element, corresponding to the $\langle k, i \rangle$-th device. We sample $\mathbf{s}^{\text{new}}$ according to the following distribution $\pi(\mathbf{s}^{\text{new}})$:
\begin{equation}
    \label{distribution}
    \pi(\mathbf{s}^{\text{new}}) \triangleq 
    \frac{\exp{\left(-\phi(\mathbf{s}^{\text{new}})/\beta \right) }}{\sum_{k=1}^{K} \sum_{i=1}^{M_k} \exp{\left(-\phi(\mathbf{s}^{\text{old}}_{\langle k, i \rangle}) /\beta \right)} + \exp{\left(-\phi(\mathbf{s}^{\text{old}}) /\beta \right)}},
\end{equation}
where $\phi(\mathbf{s}^{\text{old}}_{\langle k, i \rangle})$ denotes the objective $\mathcal{E}^{(t)}(\mathcal{M}, \mathbf{f}, \mathbf{u})$ with $\mathbf{f}$ and $\mathbf{u}$ obtained by Algorithm \ref{alg:beamforming} with the device selection set $\mathcal{M}$ corresponding to $\mathbf{s}^{\text{old}}_{\langle k, i \rangle}$, and $\beta > 0$ denotes the ``temperature parameter" to accelerate convergence. We summarize the overall algorithm for the optimization of $\mathcal{M}$, $\mathbf{f}$ and $\mathbf{u}$ in Algorithm \ref{alg:selection}. 
\begin{algorithm}[htb]
\caption{AO Algorithm plus Gibbs Sampling} 
\label{alg:selection} 
\begin{algorithmic}[1] 
\REQUIRE $\{\boldsymbol{\rho}_k^{(t)}, \mathbf{H}_{\langle l, i \rangle}^{(t)}, Q_{\langle k, i \rangle} | k,l \in [K], i \in [M_k]\}, J_{max}, \beta, \gamma$. 
\STATE {\textbf{Initialization:}} $\mathbf{s}^{\text{old}} = \mathbf{1}_{M \times 1}$.
\FOR{ $j \in [J_{max}]$ }
    \STATE $\mathbf{s}^{\text{old}} = \mathbf{s}^{\text{new}}$;
    \STATE Generate $\mathcal{S}$;
        \FOR{every $\mathbf{s}^{\text{old}}_{\langle k, i \rangle} \in \mathcal{S}$}
            \STATE Optimize $(\mathbf{f}, \mathbf{u})$ by solving P2 with given $\mathbf{s}^{\text{old}}_{\langle k, i \rangle}$, with Algorithm \ref{alg:beamforming};
        \ENDFOR
    \STATE Sample $\mathbf{s}^{\text{new}}$ according to \eqref{distribution};
    \STATE Refresh $\beta = \gamma \beta$ for a certain $\gamma \in (0,1)$;
\ENDFOR
\ENSURE $\mathbf{s}^{\text{new}}$ with corresponding $(\mathbf{f}, \mathbf{u})$.
\end{algorithmic}
\end{algorithm}

\subsection{Complexity Analysis}
We now briefly discuss the computational complexity involved in Algorithms~\ref{alg:beamforming} and \ref{alg:selection}. For Algorithm~\ref{alg:beamforming}, both P2 and P3 are QCQP problems that can be solved by existing optimization solvers based on the interior point method. Thus, the worst-case complexity of Algorithm \ref{alg:beamforming} is given by $\mathcal{O}(I_{\max} (K+M) N^{3.5})$, where $N \triangleq \max \{N_{\mathrm{T}}, N_{\mathrm{R}}\}$ denotes the maximum number of the transmit or receive antennes, $I_{\max}$ is the max iteration times of optimization, and $M$ is the total number of the devices in the OA-FMTL. Algorithm \ref{alg:selection} invokes Algorithm \ref{alg:beamforming} for $J_{max}M$ times to optimize device selection set $\mathcal{M}$. Thus, the complexity of Algorithm \ref{alg:selection} is $\mathcal{O}(J_{max} I_{\max} (KM+M^2) N^{3.5})$. 

We note that the complexity of Algorithm \ref{alg:selection} is quadratic in $M$, due to the use of Gibbs sampling in the optimization of device selection. When the number of devices $M$ in the OA-FMTL is large, Gibbs sampling causes a tremendous computation burden. Recall from Section \ref{sec:zeta_select} that the device selection is adopted by the existing works to reduce the impact of stragglers. As pointed out in Section \ref{sec:zeta_select}, our proposed scheme optimises the weighting factor $\zeta_k$ to minimize the communication MSE, which relieves the straggler problem significantly. We observe from experiments that the improvement of device selection in Algorithm \ref{alg:selection} is negligible, as compared to the performance of Algorithm \ref{alg:beamforming}. Therefore, we prefer to use Algorithm \ref{alg:beamforming} in the system optimization, since Algorithm \ref{alg:beamforming} has a much lower complexity.

\section{Numerical Results}
\subsection{Simulation Setups}
We consider a three-dimensional (3-D) simulation scenario as shown in Fig.~\ref{fig:BS}. 
The point locations are represented by cylindrical coordinate triples $(\delta, \theta, \chi)$, where $\delta$, $\theta$ and $\chi$ denote the radial distance, the azimuth, and the height, respectively. The locations of the devices are distributed as follows.
All the devices in the OA-FMTL are located in a circle with center $O = (0,0,0)$ and radius $\Delta$. We set the location of the $\langle k, i \rangle$-th device to $(\delta_{\langle k, i \rangle}, \theta_{\langle k, i \rangle}, 0)$, where $\delta^2_{\langle k, i \rangle}$ is uniform in $[0,\Delta]$, and $\theta_{\langle k, i \rangle}$ is uniform in $[0,2\pi)$.
\revise{ The PS is placed at the center of the circle, i.e.,$(0, 0, 10)$. 
We adopt the channel model in \cite{goldsmith2005wireless}, given by
$\mathbf{H}_{\langle l,i \rangle} = \sqrt{G_{\mathrm{S}} G_{\mathrm{D}}\kappa \tilde{\delta}_{\langle l,i \rangle}^{-\alpha}} \mathbf{\tilde{H}}_{\langle l,i \rangle}$,
where the entries of $\mathbf{\tilde{H}}_{\langle l,i \rangle}$ are modeled as i.i.d. circularly symmetric complex Gaussian (CSCG) random variables with zero-mean and unit-variance, $G_{\mathrm{S}}$ and $G_{\mathrm{D}}$ are the antenna gains at the PS and the devices, respectively, $\kappa$ is the path loss at the reference distance $\delta_0 = \SI{1}{m}$ \cite{wu2019intelligent}, $\alpha$ is the path loss exponent, and $\tilde{\delta}_{\langle l,i \rangle} \triangleq \sqrt{\delta_{\langle l, i \rangle}^2 + 10^2}$ is the distance between the $\langle l,i \rangle$-th device and the PS.}
The simulation settings are given in Table \ref{SimuPara}.
\begin{figure}[htbp]
    \centering
    \subfloat[][]{
        \includegraphics[width=0.35\linewidth]{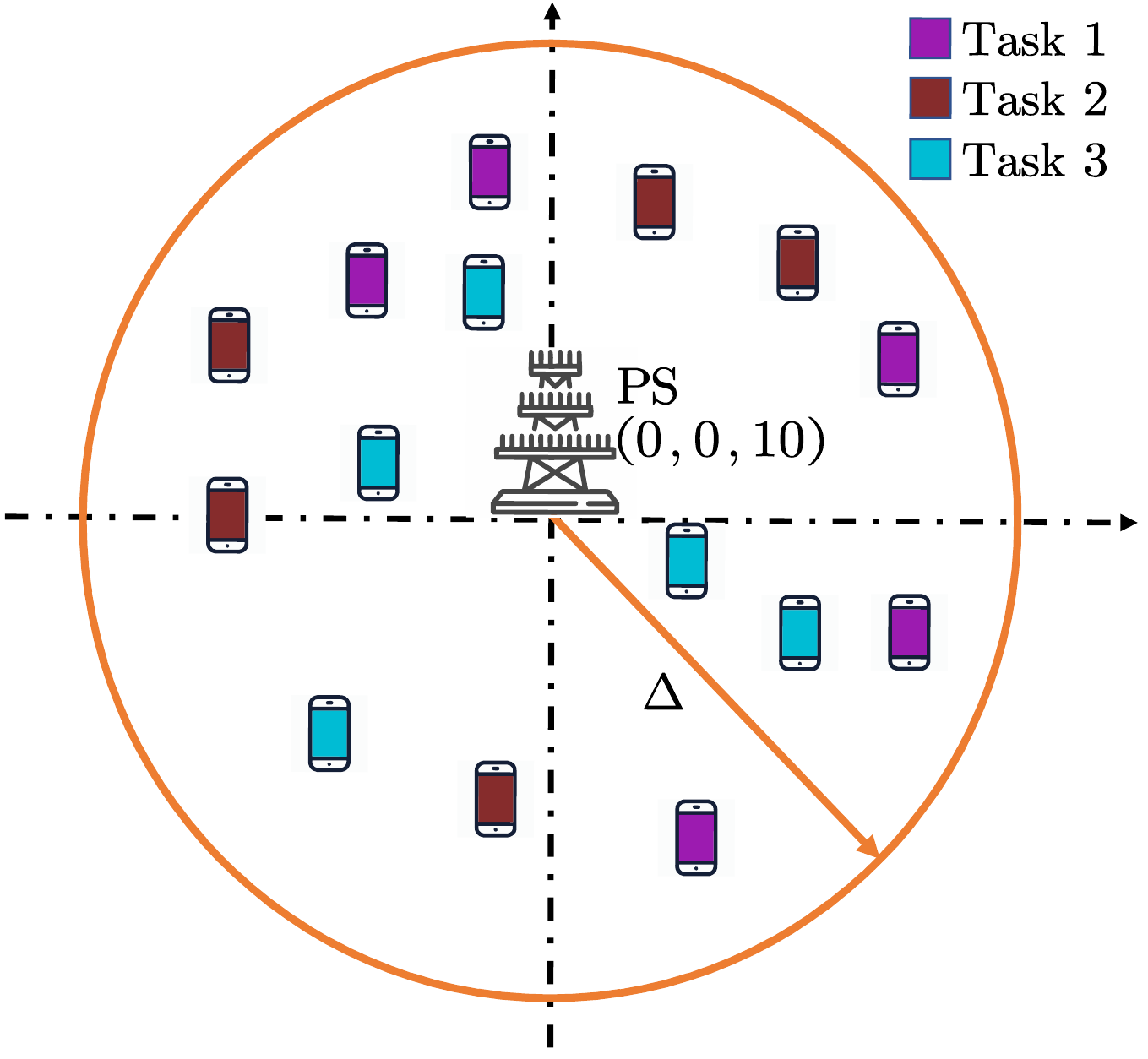}
    }
    \caption{ \revise{ An illustration of locations for the devices and the PS in the OA-FMTL on the vertical view.} }
    \label{fig:BS}
\end{figure}

\begin{table}[ht]
\caption{System Parameters}
\small 
\centering
\begin{tabular}{|c|c||c|c||c|c||c|c|}
\hline    
Parameter&Value &Parameter&Value &Parameter&Value &Parameter&Value\\
\hhline{|--||--||--||--|}    
$N_{\mathrm{T}}$&$2$       &$N_{\mathrm{R}}$&$8$        &$I_{\max}$&$50$      &$J_{max}$&$50$     \\ 
\hline    
$\alpha$&$3.8$         &$\kappa$&$\SI{-60}{dB}$  &$G_{\mathrm{S}}$&$\SI{5}{dBi}$ &$G_{\mathrm{D}}$&$\SI{0}{dBi}$\\ 
\hline    
$P_0$&$\SI{1}{W}$     &$\beta$&$1$  &$\gamma$&$0.9$   &$\Delta$&$\SI{100}{m}$\\ 
\hline 
$\sigma^2$&$\SI{-80}{dBm}$     &$Q_k$ &$60000$  & & & &\\ 
\hline
\end{tabular}
\label{SimuPara}
\end{table}

We set three image classification tasks as FL tasks in the OA-FMTL, with each FL task being trained on an individual dataset, i.e., MNIST for task 1, Fashion-MNIST for task 2 and KMNIST for task 3. 
\revise{ For each FL task, we train a CNN with two $5\!\times\!5$ convolution layers (the first with $16$ channels, the second with $32$, each followed by $2\!\times\!2$ max pooling), a fully connected layer with $50$ units and ReLu activation, and a final softmax output layer ($D\!=\!39408$ total parameters).}
The loss function is the cross-entropy loss. 
\revise{
We study two ways of partitioning the dataset $\mathcal{A}_k$ over devices: 1) \textbf{i.i.d.}, where the data are shuffled, and then assigned evenly to the $M_k$ devices; 2) \textbf{Non-i.i.d.}, where each device randomly selects $5$ classes, and then randomly draws $\frac{Q_k}{5 M_k}$ samples from each selected class.
}

\subsection{Comparisons of the Proposed Algorithms Under Various Settings}
In this subsection, we study the impact of various approximations of the correlation matrices $\{\boldsymbol{\rho}_k^{(t)}\}$. Specifically, we consider the following three approximations of the correlation matrix:
\begin{itemize}
\item \textbf{Approximation 1}: 
$\boldsymbol{\rho}_k^{(t)} = \frac{1}{D} \sum_{d=1}^D \mathbf{z}_{\langle k,d \rangle}^{(t)} \mathbf{z}_{\langle k,d \rangle}^{(t)}{}^\mathrm{T}$.
\revise{
\item \textbf{Approximation 2}: $\boldsymbol{\rho}_k^{(t)}$ is approximated by $\boldsymbol{\rho}_k^{(t)} = \epsilon \mathbf{1}_{M_k} + (1- \epsilon) \mathbf{I}_{M_k}$, where $\epsilon$ is an empirical parameter to represent the degree of correlation.
}
\end{itemize}
\revise{ Approximation 1 estimates the spatial correlation between the gradients of the devices from the same task $k$ at each round. However, Approximation 1 is impractical since the calculation of $\boldsymbol{\rho}_k^{(t)}$ requires the knowledge of the local gradients uploaded by the devices. In contrast, Approximation 2 is more practical in implementation, where $\epsilon$ can be set empirically. Therefore, the proposed schemes with Approximation 1 are only used as a performance baseline, and the schemes with Approximation 2 are used in performance comparison with other counterpart schemes.}
In addition, we adopt the error-free case as a performance upper bound of each FL task in the OA-FMTL:
\begin{itemize}
\item Error-free bound: Each FL task is trained independently with all the devices being selected, and the model aggregation is error-free at each communication round. 
\end{itemize}
\begin{figure}[htbp]
    \centering
    \includegraphics[width=0.7\linewidth]{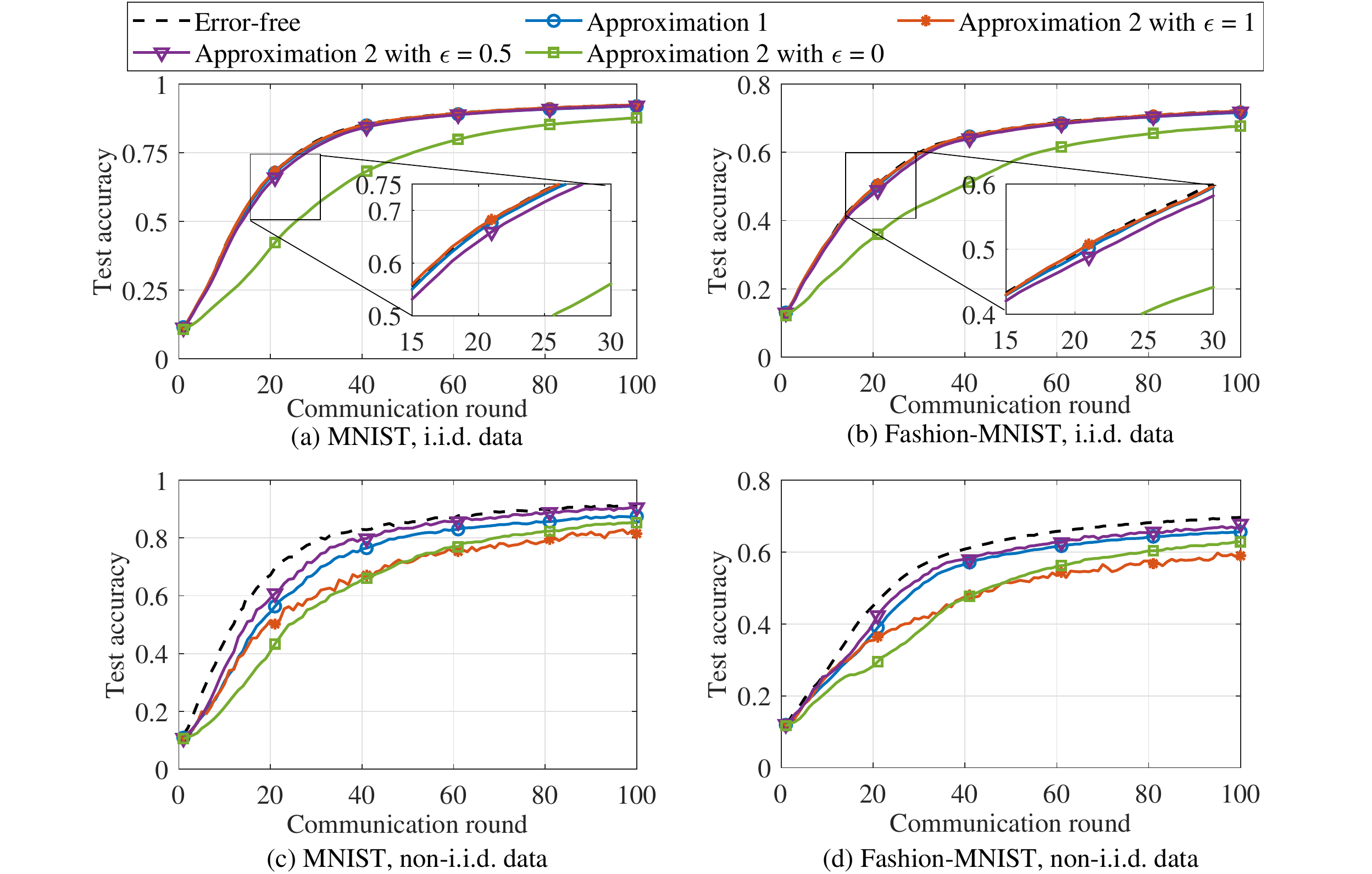}
    \caption{ \revise{ FL test accuracy of the proposed Algorithm \ref{alg:beamforming} with different approximations of $\boldsymbol{\rho}_k^{(t)}$ versus the communication rounds, with $K = 2$, $M_1 = M_2 = 20$. Left: MNIST; right: Fashion-MNIST; top: i.i.d. data; bottom: non-i.i.d. data.} }
    \label{fig:corr_acc}
\end{figure}

We plot the FL test accuracy curve of the proposed Algorithm \ref{alg:beamforming} with the above approximations of the correlation matrix $\boldsymbol{\rho}_k^{(t)}$ in Fig.~\ref{fig:corr_acc}. 
\revise{ We train two FL tasks, namely, tasks 1 and 2, both on i.i.d. data and on non-i.i.d. data. We set the numbers of devices $M_1\!=\!M_2\!=\!20$, the learning rates $\eta_1\!=\!\eta_2\!=\!0.05$, and the momentum $ = 0.5$. The local updates consist of $5$ times of SGD. The results are averaged over $20$ Monte Carlo trials. In Approximation 2, we set $\epsilon\!=\!0$, $\epsilon\!=\!0.5$, and $\epsilon\!=\!1$.
From Fig.~\ref{fig:corr_acc}, we see that in the case of i.i.d. data, both Approximation 1 and Approximation 2 with $\epsilon\!=\!1$ achieve test accuracies close to the error-free bound on both two tasks, and Approximation 2 with $\epsilon\!=\!0$ has the worst learning performance. This is because Approximation 2 with $\epsilon\!=\!0$ suffers from a serious aggregation error for ignoring the correlation between the local gradients. 
On the other hand, in the case of non-i.i.d. data, the accuracies achieved by Approximation 1 and Approximation 2 with $\epsilon\!=\!0.5$ are close to the error-free bound, since $\epsilon\!=\!0.5$ approximates the spatial correlation more precisely in the case of non-i.i.d. data. 
Thus, Approximation 2 with $\epsilon\!=\!1$ for i.i.d. data and Approximation 2 with $\epsilon\!=\!0.5$ for non-i.i.d. data are preferred in the system design. 

We next study the necessity of device selection by comparing the proposed Algorithms \ref{alg:beamforming} and \ref{alg:selection}. We simulate Algorithms \ref{alg:beamforming} and \ref{alg:selection} on tasks $1$ and $2$, with $\epsilon\!=\!1$ on i.i.d. data and $\epsilon\!=\!0.5$ on non-i.i.d. data. 
The numbers of devices are set to $M_1 = M_2 = 10$. The learning rates are set to $\eta_1 = \eta_2 = 0.05$, and the momentum is set to $0.5$. The local updates have $10$ times of SGD. The results are averaged over $10$ Monte Carlo trials
\footnote{\revise{
    Note that we choose a relatively small number of trials due to the high computational complexity of device selection in Algorithm \ref{alg:selection}. In simulations, we use a personal computer with an Intel(R) Core(TM) i7-10700 CPU and a GTX 1050Ti GPU. One Monte Carlo trial of Algorithm 3 takes about $10$ hours.}
}. 
In Fig.~\ref{fig:samp_acc}, we present the test accuracy of Algorithms \ref{alg:beamforming} and \ref{alg:selection} versus communication rounds on i.i.d. and non-i.i.d. data. From Fig.~\ref{fig:samp_acc}, we see that Algorithms \ref{alg:beamforming} and \ref{alg:selection} always perform closely.
}
This implies that device selection is no longer necessary to our proposed scheme, which avoids the high computational complexity involved in the implementation of device selection. Thus, we henceforce always employ Algorithm \ref{alg:beamforming} performance comparison when we refer to the proposed AO algorithm.

\begin{figure}[htbp]
    \centering
    \includegraphics[width=0.7\linewidth]{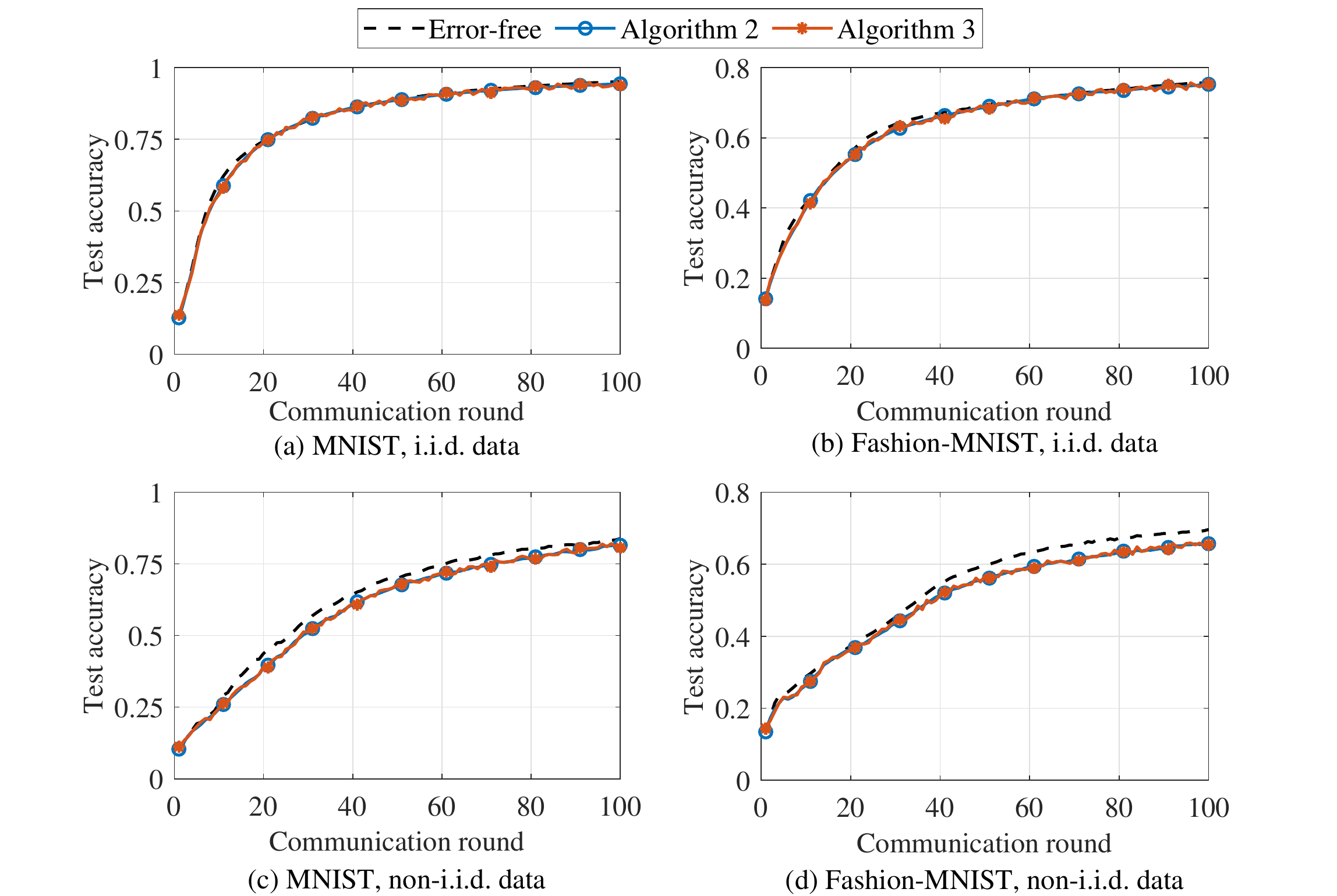}
    \caption{ \revise{ Test accuracy versus communication rounds, with $K = 2$, $M_1 = M_2 = 10$. Left: MNIST; right: Fashion-MNIST; top: i.i.d. data; bottom: non-i.i.d. data.} }
    \label{fig:samp_acc}
\end{figure}

\subsection{Comparisons With Existing Schemes}
In this subsection, we present the performance of system optimization obtained by the proposed AO algorithm on i.i.d. data. We consider the following baselines for comparison: 
\begin{itemize}
\item SOCP-based cooperative power control \cite{cao2021cooperative}: This method assumes that all the devices selected, and that the local gradients from devices in the same task are independent with each other. The phases of transmit beamforming $\mathbf{u}_{\langle k,i \rangle}, \forall k, i$ are given by the zero-forcing. The design of transmit power $\|\mathbf{u}_{\langle k,i \rangle}\|^2, \forall k, i$ are formulated as an second-order cone programming (SOCP) problem solved by the bisection method. 
\item SCA-based optimization and device selection \cite{liu2020reconfigurable}: 
\revise{ $\mathcal{M}_k$, $\mathbf{f}_k$, and $\{\mathbf{u}_{\langle k, i \rangle}\}$ for each FL task are optimized separately.} For task $k$, the receive beamforming $\mathbf{f}_k$ is optimized by the successive convex approximation (SCA)-based optimization algorithm, with given transmit beamforming $\mathbf{u}_k$ and the weighting factor $\zeta_k$ determined by zero-forcing. With given optimized $\mathbf{f}_k$, $\mathbf{u}_k$ and $\zeta_k$, device selection set $\mathcal{M}_k$ is optimized via Gibbs sampling.
\item Receive beamforming by differential geometry programming \cite{zhu2018mimo}: This method optimizes each task separately, with all the devices selected. $\mathbf{f}_k$ is optimized on the Grassmann manifold via differential geometry programming, and $\mathbf{u}_k$ is given by the zero-forcing.
\item Difference-of-convex (DC) programming and device selection \cite{yang2020federated}: This method optimizes each task separately. For each task, the method maximizes the number of selected devices by a two-step framework based on DC programming, with a given threshold of the communication MSE. 
\end{itemize}

\revise{ Here we simulate the case of $3$ FL tasks on i.i.d. data. The numbers of devices for the three tasks are set to $M_1\!=\!M_2\!=\!M_3\!=\!20$. The learning rates are set to $\eta_1\!=\!\eta_2\!=\!\eta_3\!=\!0.05$. The local updates contain $5$ mini-batches of SGD. The noise power is $\sigma^2 = \SI{-60}{dBm}$. We set $\epsilon\!=\!0.5$.}
\revise{
Besides, we introduce the normalized mean square error (NMSE) of each task $k$ at round $t$, defined as
$\mathrm{NMSE}^{(t)}_k\!\triangleq 10\log_{10}\!\Big( \mathbb{E} \big [ \|\hat{\mathbf{g}}_{k}^{(t)} - \mathbf{g}_{k}^{(t)}\|_2^2 / \|\mathbf{g}_{k}^{(t)}\|_2^2 \big]\Big).
$
In Table~\ref{tab:NMSE}, we list the average NMSE over $20$ Monte Carlo trials at $t=40$ and $t=90$. Benefiting from interference awareness, the proposed AO algorithm and the method in \cite{cao2021cooperative} achieve much better NMSEs on all the three FL tasks than the other methods.} We also see that our algorithm significantly outperforms the method in \cite{cao2021cooperative}. This is attributed to the careful optimization of the transceiver beamforming based on the proposed analytical framework.
\begin{table}[htbp]
    \caption{Communication NMSE}
    \small
    \centering
    \begin{tabular}{l|cccccc}
    \hline
    \multicolumn{1}{c|}{\multirow{3}{*}{Optimization method}}              & \multicolumn{6}{c}{NMSE ($\SI{}{dB}$)}                                            \\ \cline{2-7} 
    \multicolumn{1}{c|}{}                                                  & \multicolumn{2}{c|}{MNIST ($k = 1$)}    & \multicolumn{2}{c|}{Fashion-MNIST ($k = 2$)} & \multicolumn{2}{c}{KMNIST ($k = 3$)}\\ \cline{2-7} 
    \multicolumn{1}{c|}{}   & $t=40$ & \multicolumn{1}{c|}{$t=90$} 
    & $t=40$     & \multicolumn{1}{c|}{$t=90$}       & $t=40$ & $t=90$     \\ 
    \hline
    Proposed AO algorithm  & $\mathbf{-1.43}$  & \multicolumn{1}{c|}{$\mathbf{-1.37}$} 
        & $\mathbf{-1.81}$     & \multicolumn{1}{c|}{$\mathbf{-1.40}$} 
        & $\mathbf{-1.98}$     & $\mathbf{-1.75}$    \\
    SOCP-based cooperative power control \cite{cao2021cooperative} & $-0.84$  & \multicolumn{1}{c|}{$-1.01$} 
        & $-1.10$        & \multicolumn{1}{c|}{$-1.01$}   & $-1.11$     & $-0.98$    \\
    SCA \& Gibbs \cite{liu2020reconfigurable}                & $-0.40$      & \multicolumn{1}{c|}{$-0.47$}  
        & $-0.50$      & \multicolumn{1}{c|}{$-0.52$}  & $-0.58$    & $-0.52$     \\
    Differential geometry \cite{zhu2018mimo}                 & $-0.16$    & \multicolumn{1}{c|}{$-0.20$} 
        & $-0.21$      & \multicolumn{1}{c|}{$-0.19$}  & $-0.22$    & $-0.21$    \\
    DC and device selection \cite{yang2020federated}         & $-0.17$    & \multicolumn{1}{c|}{$-0.20$} 
        & $-0.15$      & \multicolumn{1}{c|}{$-0.21$}  & $-0.06$    & $-0.13$    \\ 
    \hline
    \end{tabular}
    \label{tab:NMSE}
\end{table}

\begin{figure}[htbp]
    \centering
    \includegraphics[width=0.9\linewidth]{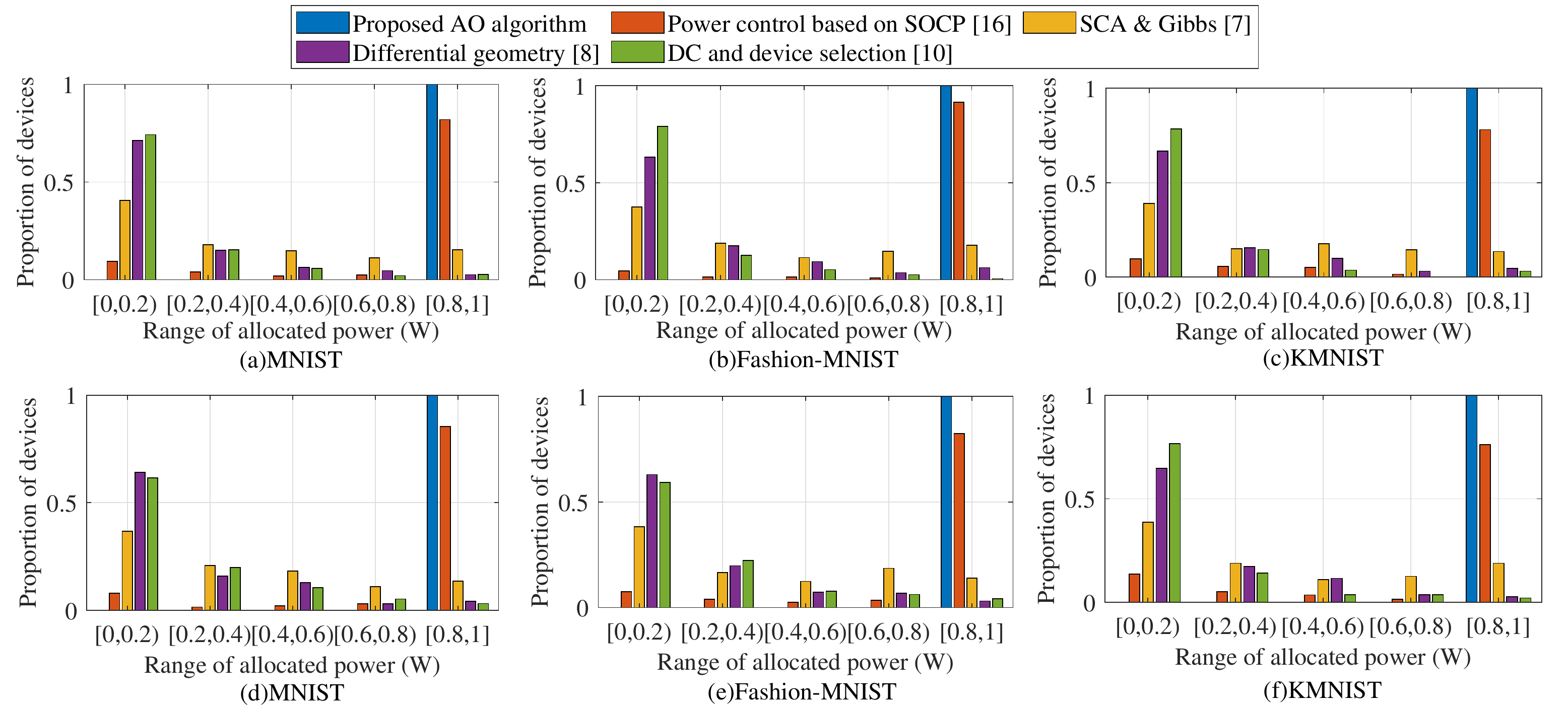}
    \caption{Proportion of devices versus range of allocated power with tasks 1, 2 and 3. $t = 40$ for (a)-(c); $t = 90$ for (d)-(f).}
    \label{fig:opt_pow}
\end{figure}
In Fig.~\ref{fig:opt_pow}, we plot the histogram of allocated transmission powers for various optimization methods. The methods based on zero-forcing \cite{cao2021cooperative, zhu2018mimo, yang2020federated} only allocate full power to less than $20\%$ of all the devices, due to the stragglers with the worst channel conditions. The SCA \& Gibbs algorithm in \cite{liu2020reconfigurable} excludes several stragglers through Gibbs sampling, improving the number of full-power-allocated devices, but the percentage is still below $50\%$. We see that the transmission powers of most devices are allocated fully in the proposed AO algorithm. This is because our proposed scheme relaxes the hard requirement for all the devices to align their gradients with the stragglers, which gives freedom to the devices to fully exploit the power budgets.

In Fig.~\ref{fig:opt_acc}, we present the test accuracies of various optimization algorithms versus communication rounds. As shown in Fig.~\ref{fig:opt_acc}, the proposed algorithm achieves an accuracy close to the error-free bound in all the three FL tasks and significantly outperforms the other baselines, which clearly demonstrates the superiority of our proposed scheme. 
\begin{figure}[ht]
    \centering
    \includegraphics[width=0.9\linewidth]{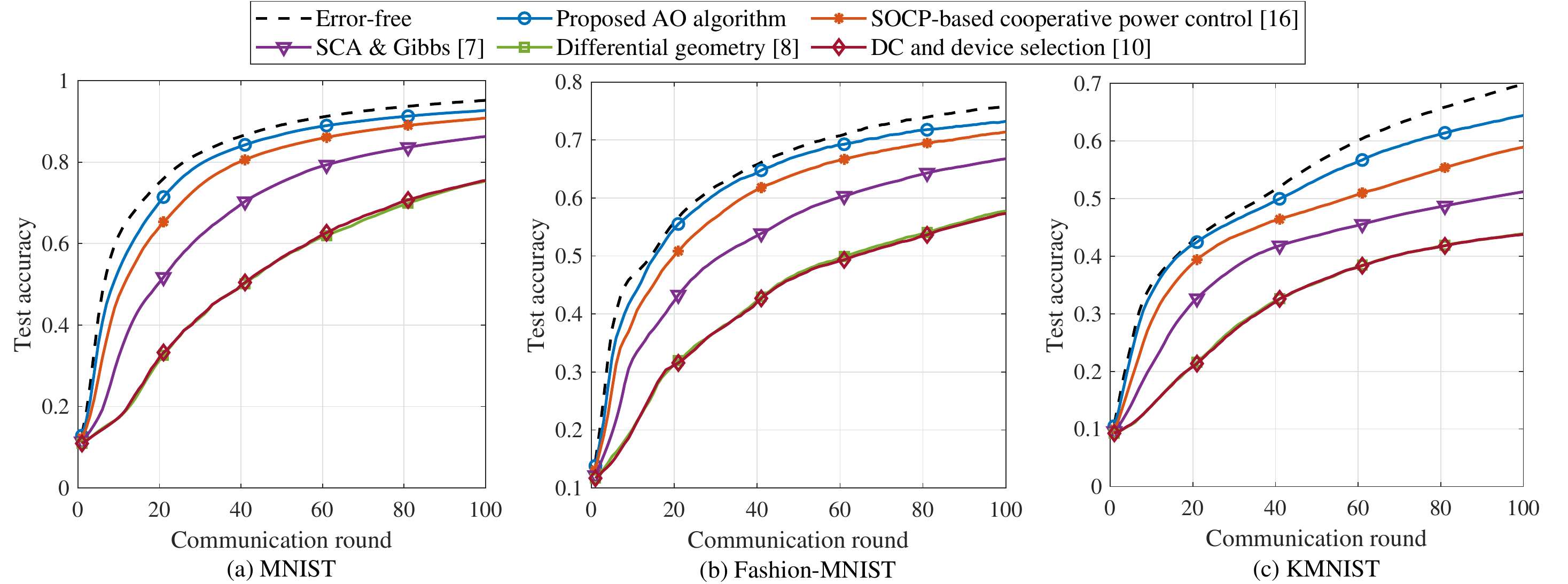}
    \caption{Test accuracy versus communication rounds on (a) MNIST, (b) Fashion-MNIST, and (c) KMNIST, with $K = 3$, $M_1 = M_2 = M_3 = 20$, i.i.d. data.}
    \label{fig:opt_acc}
\end{figure}

\section{Conclusion}
In this paper, we studied a problem of designing an OA-FMTL system over MIMO MAC channel. 
We proposed a misalignment-tolerant strategy to align the local gradients at model aggregation at the PS side to relieve the straggler problem. We further derived a communication-learning framework to analyze the OA-FMTL performance by characterizing the performance loss due to device selection, inter-task interference and communication noise. Based on the analytical framework, we formulated an optimization problem with respect to device selection, transmit beamforming, and receive beamforming. We captured the spatial correlation between the local gradients to enhance the optimization and proposed a low-complexity algorithm to solve the communication-learning problem based on AO framework. Finally, we performed extensive numerical experiments to demonstrate the learning accuracy outstanding improvements of the proposed algorithm by comparison with the state-of-the-art methods.

\appendices
\section{Proof of Theorem~\ref{Theorem:MSE}}
\label{Appendix:Prf_Theorem:MSE}

From [\citenum{friedlander2012hybrid}, 3.2], the device selection MSE $\mathbb{E}[ \|\mathbf{e}_{\mathrm{ds},k}^{(t)} \|^{2}]$ is bounded by \eqref{ErrorDevicesSelection}.
We next consider the communication MSE of the $k$-th task given by
\begin{equation}
    \label{MSE}
    \begin{aligned}
        \mathbb{E}\left[\|\mathbf{e}_{\mathrm{com},k}^{(t)}\|^{2}\right] 
        = & \sum\nolimits_{d=1}^{D} \mathbb{E}\left[\left|g_k^{(t)}[d] - \hat{g}_k^{(t)}[d] \right|^2\right].
    \end{aligned}
\end{equation}
By plugging \eqref{idealgradient}, \eqref{nomalize}, \eqref{modulation} and \eqref{receivedgradient} into \eqref{MSE}, we obtain 
\begin{align}
    \label{MSEplug}
    \mathbb{E}[\|\!\mathbf{e}_{\!\mathrm{com},k}^{(\!t\!)}\!\|^{2}] 
    \!\!=\!\!\frac{1}{\big(\!\!\sum\limits_{i \in \mathcal{M}_k}\!\!Q_{\langle k, i \rangle}\!\big)^{2}}\!\!\sum_{c=1}^{C}\!
    \mathbb{E}\!\Bigg[\!\Big|\!\!\sum_{i \in\!\mathcal{M}_{\!k}}\!Q_{\!\langle \!k, i \rangle}\!\!\sqrt{\!v_k^{\!(\!t\!)}}\!r_{\!\langle\!k, i \rangle}^{(\!t\!)}\![c]\!- 
    \!\!\!\sum_{l\!=\!1}^K\!\!\sum_{i \in\!\mathcal{M}_{\!l}}\!\zeta_k\!\mathbf{f}_{\!k}^\mathrm{H}\!\mathbf{H}_{\langle\!l, i \rangle}^{(\!t\!)}\!\mathbf{u}_{\!\langle\!l, i \rangle}\!r_{\!\langle\!l, i \rangle}^{(\!t\!)}\![c]\!
    \!-\!\zeta_{\!k}\!\mathbf{f}_{\!k}^\mathrm{H}\!\mathbf{n}^{(\!t\!)}\![c]\!\Big|^2\!\Bigg]\!.
\end{align}
Based on Assumption~\ref{asp:g_task}, 
for $\forall k \neq l$, we have $\mathbb{E}[r_{\langle k, i \rangle}[c] r_{\langle l, j \rangle}[c]^\dagger] = 0$.
Thus, we obtain \eqref{ErrorCommunication} by expanding \eqref{MSEplug}. 
What remains is to prove \eqref{F_Upperbound}. From [\citenum{friedlander2012hybrid}, Lemma~2.1], Assumptions~\ref{asp:f_Lipschitz}-\ref{asp:f_bound} lead to an upper bound of $F_k(\cdot)$ at the $t$-th round. Thus, we have the following lemma.
\begin{lemma}
    Assume that $F_k(\cdot)$ satisfies Assumptions~\ref{asp:f_Lipschitz}-\ref{asp:f_bound}, at the $t$-th communication round with the learning rate $\eta_k$ is set to $1/\omega_k$. Then
    \begin{equation}
        \mathbb{E}[F_k(\mathbf{w}_k^{(t+1)})]\!\leq\! \mathbb{E}[F_k(\mathbf{w}_k^{(t)})]\!-\!\frac{1}{2 \omega_k} \mathbb{E}[\|\nabla F_k(\mathbf{w}_k^{(t)})\|^{2}] + \!\frac{1}{2\omega_k} \mathbb{E}[\|\mathbf{e}_k^{(t)}\|^{2}],
        \label{LemmaEq:upper bound}
    \end{equation}
    where $\omega_k$ is the Lipschitz continuity parameter defined in \eqref{Lipschitz}, and $\mathbb{E}[\cdot]$ is the expectation w.r.t. $\{ n_{k,z}^{(t)}[c], g_{\langle k, i \rangle}^{(t)}[d] | k \in [K], z \in [N_{\mathrm{R}}], c \in [C], i \in [M_k], d \in [D], \tau \in [t+1] \} $.
    \label{Lemma:Ef}
\end{lemma}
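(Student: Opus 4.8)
The plan is to specialize the standard descent step of [\citenum{friedlander2012hybrid}, Lemma~2.1] to the error-perturbed gradient update in \eqref{SGD_error_updates}. Recalling the decomposition $\hat{\mathbf{g}}_k^{(t)} = \nabla F_k(\mathbf{w}_k^{(t)}) - \mathbf{e}_k^{(t)}$ from \eqref{SGD_error_updates}, I would first rewrite the update as $\mathbf{w}_k^{(t+1)} - \mathbf{w}_k^{(t)} = -\eta_k(\nabla F_k(\mathbf{w}_k^{(t)}) - \mathbf{e}_k^{(t)})$ and set $\eta_k = 1/\omega_k$ as in the hypothesis. By Assumption~\ref{asp:f_Lipschitz}, i.e., $\nabla F_k$ is $\omega_k$-Lipschitz, the function $F_k$ obeys the usual quadratic upper bound $F_k(\mathbf{w}_k^{(t+1)}) \leq F_k(\mathbf{w}_k^{(t)}) + \nabla F_k(\mathbf{w}_k^{(t)})^\mathrm{T}(\mathbf{w}_k^{(t+1)} - \mathbf{w}_k^{(t)}) + \frac{\omega_k}{2}\|\mathbf{w}_k^{(t+1)} - \mathbf{w}_k^{(t)}\|^2$.

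Next I would substitute the update into this bound. Writing $\mathbf{g} \triangleq \nabla F_k(\mathbf{w}_k^{(t)})$ for brevity, the right-hand side becomes $F_k(\mathbf{w}_k^{(t)}) - \frac{1}{\omega_k}\mathbf{g}^\mathrm{T}(\mathbf{g} - \mathbf{e}_k^{(t)}) + \frac{1}{2\omega_k}\|\mathbf{g} - \mathbf{e}_k^{(t)}\|^2$. Expanding the inner product and the squared norm, the two cross terms $\pm\frac{1}{\omega_k}\mathbf{g}^\mathrm{T}\mathbf{e}_k^{(t)}$ cancel exactly --- this cancellation is precisely what the choice $\eta_k = 1/\omega_k$ is engineered to produce --- leaving the \emph{deterministic} inequality $F_k(\mathbf{w}_k^{(t+1)}) \leq F_k(\mathbf{w}_k^{(t)}) - \frac{1}{2\omega_k}\|\mathbf{g}\|^2 + \frac{1}{2\omega_k}\|\mathbf{e}_k^{(t)}\|^2$. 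Taking the expectation of both sides over the channel noise and the stochastic local gradients (i.e., over $\{n_{k,z}^{(t)}[c], g_{\langle k, i \rangle}^{(t)}[d]\}$) then produces \eqref{LemmaEq:upper bound}.

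The one point that requires care --- rather than a genuine obstacle --- is that $\mathbf{e}_k^{(t)}$ is statistically dependent on $\mathbf{g}$, since the device-selection error $\mathbf{e}_{\mathrm{ds},k}^{(t)}$ and the communication error $\mathbf{e}_{\mathrm{com},k}^{(t)}$ are driven by the very same local gradients; hence one must not attempt to discard the cross term via an unbiasedness/independence argument. The computation above sidesteps this entirely, because the inequality is deterministic before any expectation is taken, so no conditional-independence hypothesis is needed. I note finally that Assumptions~\ref{asp:f_convex}--\ref{asp:f_bound} are not used in the lemma itself; they enter only afterwards, when \eqref{LemmaEq:upper bound} is combined with $\|\mathbf{e}_k^{(t)}\|^2 \leq 2\|\mathbf{e}_{\mathrm{ds},k}^{(t)}\|^2 + 2\|\mathbf{e}_{\mathrm{com},k}^{(t)}\|^2$ (from the split \eqref{ErrorTwoPart} and the triangle inequality) and with the bounds \eqref{ErrorDevicesSelection}--\eqref{ErrorCommunication} to yield \eqref{F_Upperbound} in Theorem~\ref{Theorem:MSE}.
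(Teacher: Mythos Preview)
Your proposal is correct and is essentially the same as the paper's approach: the paper simply cites [\citenum{friedlander2012hybrid}, Lemma~2.1], and what you have written is precisely the specialization of that descent lemma to the error-perturbed update \eqref{SGD_error_updates} with $\eta_k=1/\omega_k$. Your observation that the cross terms cancel deterministically (so no independence between $\mathbf{e}_k^{(t)}$ and $\nabla F_k(\mathbf{w}_k^{(t)})$ is needed) and that only Assumption~\ref{asp:f_Lipschitz} is actually used is accurate.
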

\begin{proof}
    See [\citenum{friedlander2012hybrid}, Lemma~2.1].
\end{proof}
\noindent In addition, we obtain
\begin{equation}
    \label{ErrorAMGM}
    \mathbb{E}[\|\mathbf{e}_k^{(t)}\|^{2}] \!\overset{\text{(a)}}{=}\! \mathbb{E}[\|\mathbf{e}_{\mathrm{ds},k}^{(t)} + \mathbf{e}_{\mathrm{com},k}^{(t)}\|^{2}] \!\overset{\text{(b)}}{\leq}\!
    2\left(\mathbb{E} [ \|\mathbf{e}_{\mathrm{ds},k}^{(t)} \|^{2}]+\mathbb{E}[\|\mathbf{e}_{\mathrm{com},k}^{(t)}\|^{2}]\right), 
\end{equation}    
where step (a) is from the expression of $\mathbf{e}_k^{(t)}$ in \eqref{ErrorTwoPart}, and step (b) is from the inequality of arithmetic and geometric means. By plugging \eqref{ErrorAMGM} into \eqref{LemmaEq:upper bound}, we obtain \eqref{F_Upperbound}. 

\section{Proof of Corollary~\ref{Corollary:MSE}}
\label{Appendix:Prf_Corollary:MSE}
From \eqref{ErrorCommunication}, we obtain
\begin{subequations}
    \begin{align}
        & \mathbb{E}\left[\|\mathbf{e}_{\mathrm{com},k}^{(t)}\|^{2}\right]
        \overset{\text{(a)}}{=}  \frac{C}{\left(\sum_{i \in \mathcal{M}_k} Q_{\langle k, i \rangle}\right)^2}\!\left( \sum\nolimits_{i,j\in \mathcal{M}_k} 2\rho_{\langle k, i \rangle, \langle k, j \rangle}^{(t)} Q_{\langle k, i \rangle}Q_{\langle k, j \rangle}  v_k^{(t)} \right.  \notag\\
        &\quad \left. - \zeta_k \sqrt{v_k^{(t)}} \sum\nolimits_{i,j\in \mathcal{M}_k} 2\rho_{\langle k, i \rangle, \langle k, j \rangle}^{(t)} \left( Q_{\langle k, i \rangle} (\mathbf{f}_k^\mathrm{H} \mathbf{H}_{\langle k, j \rangle}^{(t)} \mathbf{u}_{\langle k, j \rangle})^\mathrm{H}   +  Q_{\langle k, j \rangle} \mathbf{f}_k^\mathrm{H} \mathbf{H}_{\langle k, i \rangle}^{(t)} \mathbf{u}_{\langle k, i \rangle} \right) \right.  \notag\\
        &\quad \left. + \zeta_k^2 \left (\sum\nolimits_{l =1}^K \sum\nolimits_{i,j \in \mathcal{M}_{l}}\!2\rho_{\langle l, i \rangle, \langle l, j \rangle}^{(t)}  (\mathbf{f}_k^\mathrm{H} \mathbf{H}_{\langle l, i \rangle}^{(t)} \mathbf{u}_{l,i} )^\mathrm{H}  \mathbf{f}_k^\mathrm{H} \mathbf{H}_{\langle l, j \rangle}^{(t)} \mathbf{u}_{\langle l, j \rangle} + \sigma^{2} \| \mathbf{f}_k \|^2 \right)  \right)  \label{MSEExpand}\\
        & \overset{\text{(b)}}{\geq} 
        \frac{2C v_k^{(t)}}{\left(\sum_{i \in \mathcal{M}_k} Q_{\langle k, i \rangle}\right)^2} 
        \left( \sum\nolimits_{i,j\in \mathcal{M}_k} \rho_{\langle k, i \rangle, \langle k, j \rangle}^{(t)}  Q_{\langle k, i \rangle}Q_{\langle k, j \rangle} \right. \notag\\
        &\quad \left. - \frac{ \left( \sum_{i,j\in \mathcal{M}_k} \rho_{\langle k, i \rangle, \langle k, j \rangle}^{(t)} \left(Q_{\langle k, i \rangle}  (\mathbf{f}_k^\mathrm{H} \mathbf{H}_{\langle k, j \rangle}^{(t)} \mathbf{u}_{\langle k, j \rangle})^\mathrm{H}  + Q_{\langle k, j \rangle} \mathbf{f}_k^\mathrm{H} \mathbf{H}_{\langle k, i \rangle}^{(t)} \mathbf{u}_{\langle k, i \rangle}\right) \right) ^2 }{4 \left(\sum_{l=1}^K \sum_{i,j\in \mathcal{M}_{l}} \rho_{\langle l, i \rangle, \langle l, j \rangle}^{(t)} (\mathbf{f}_k^\mathrm{H} \mathbf{H}_{\langle l, i \rangle}^{(t)} \mathbf{u}_{\langle l, i \rangle})^\mathrm{H} \mathbf{f}_k^\mathrm{H} \mathbf{H}_{\langle l, j \rangle}^{(t)} \mathbf{u}_{\langle l, j \rangle} + \sigma^2  \| \mathbf{f}_k \|^2 /2\right) } 
        \!\right) \label{MSEDerived},
    \end{align}
\end{subequations}
where step (a) is from $\mathbb{E}[r_{\langle k, i \rangle}[c] r_{\langle k, j \rangle}[c]^\dagger] = 2 \rho_{\langle k, i \rangle, \langle k, j \rangle}$ based on Assumption~\ref{asp:g_task}, and step (b) is because $\mathbb{E}[\|\mathbf{e}_{\mathrm{com},k}^{(t)}\|^{2}]$ is a convex quadratic function w.r.t. $\zeta_k$ and the minimizer $\zeta_k^*$ is given by \eqref{opt_c}. 

\section{Proof of Theorem \ref{Theorem:Convergence}}
\label{Appendix:Prf_Theorem:Convergence}
We first derive an upper bound w.r.t. the communication MSE in \eqref{MSEDerived}.
For device $\langle k, i \rangle, \forall k, i$, we have
\begin{align}
    \label{Dkv}
    2 C v_k^{(t)} 
    & = 2 C v_{\langle k, i \rangle}^{(t)} 
     = \sum\nolimits_{d=1}^{D} \left( \mathbb{E}[|g_{\langle k, i \rangle}^{(t)}[d]|^2] -
    |\mathbb{E}[g_{\langle k, i \rangle}^{(t)}[d] ] |^2 \right) \overset{\text{(a)}}{\leq} \mathbb{E}[\|\mathbf{g}_{\langle k, i \rangle}^{(t)}\|^2 ] \notag \\ 
    &\overset{\text{(b)}}{=}\mathbb{E}\bigg[\| \frac{1}{Q_{\langle k, i \rangle}}\sum\nolimits_{n = 1}^{Q_{\langle k, i \rangle}} \nabla f_k (\mathbf{w}^{(t)}_k; \boldsymbol{\xi}_{\langle k, i \rangle,n} ) \|^{2}\bigg] 
    \overset{\text{(c)}}{\leq}\mathbb{E}\bigg[ (\frac{1}{Q_{\langle k, i \rangle}} \sum\nolimits_{n = 1}^{Q_{\langle k, i \rangle}} \|\nabla f_k (\mathbf{w}^{(t)}_k ; \boldsymbol{\xi}_{\langle k, i \rangle,n} ) \|)^2\bigg] \notag \\
    &\overset{\text{(d)}}{\leq} \mathbb{E}\bigg[ \Big(\frac{1}{Q_{\langle k, i \rangle}}  \sum\nolimits_{n = 1}^{Q_{\langle k, i \rangle}} \sqrt{\beta_{1}+\beta_{2}\|\nabla F_k (\mathbf{w}^{(t)}_k )\|^{2}}\Big)^{2}\bigg] =\beta_{1}+\beta_{2}\mathbb{E}[\|\nabla F_k (\mathbf{w}^{(t)}_k )\|^{2}] 
\end{align}
where step (a) is from $\big |\mathbb{E}[g_{\langle k, i \rangle}^{(t)}[d]] \big|^2 \geq 0$; (b) is from the fact that $\mathbb{E}[\mathbf{g}_{\langle k,i \rangle}^{(t)}] = \mathbb{E}[\nabla F_{\langle k,i \rangle} (\mathbf{w}_k^{(t)})]$ and with definition of $ F_{\langle k,i \rangle} (\mathbf{w}_k^{(t)}) $ given below \eqref{LossFuncTaskk}; (c) is from the triangle inequality; and (d) is from \eqref{eq:f_bound}. 
Note that \eqref{MSEDerived} is obtained by the expression of $\mathbb{E}[\|\mathbf{e}_{\mathrm{com},k}^{(t)}\|^{2}]$ in \eqref{ErrorCommunication} as $\zeta_k = \zeta_k^*$. By plugging \eqref{Dkv} into \eqref{MSEDerived}, we obtain an upper bound w.r.t. $\mathbb{E}[\|\mathbf{e}_{\mathrm{com},k}^{(t)}\|^{2}]$:
\begin{align}
    & \mathbb{E}[\|\mathbf{e}_{\mathrm{com},k}^{(t)}\|^{2}]
    \leq  \frac{\beta_1 + \beta_2 \mathbb{E}[ \|\nabla F_k(\mathbf{w}_k^{(t)}) \|^{2} ] }{\left(\sum_{i \in \mathcal{M}_k} Q_{\langle k, i \rangle}\right)^2} \left( \sum_{i,j\in \mathcal{M}_k} \rho_{\langle k, i \rangle, \langle k, j \rangle}^{(t)}  Q_{\langle k, i \rangle}Q_{\langle k, j \rangle} \right. \notag \\
    & \left. -\frac{ \left( \sum_{i,j \in \mathcal{M}_k } \rho_{\langle k, i \rangle, \langle k, j \rangle}^{(t)} \left(Q_{\langle k, i \rangle}  (\mathbf{f}_k^\mathrm{H} \mathbf{H}_{\langle k, j \rangle}^{(t)} \mathbf{u}_{\langle k, j \rangle})^\mathrm{H} + Q_{\langle k, j \rangle} \mathbf{f}_k^\mathrm{H} \mathbf{H}_{\langle k, i \rangle}^{(t)} \mathbf{u}_{\langle k, i \rangle}\right) \right) ^2 }{4 \left(\sum_{l=1}^K \sum_{i,j\in \mathcal{M}_l} \rho_{\langle l, i \rangle, \langle l, j \rangle}^{(t)} (\mathbf{f}_k^\mathrm{H} \mathbf{H}_{\langle l, i \rangle}^{(t)} \mathbf{u}_{\langle l, i \rangle})^\mathrm{H}  \mathbf{f}_k^\mathrm{H} \mathbf{H}_{\langle l, j \rangle}^{(t)}\mathbf{u}_{\langle l, j \rangle} + \sigma^2  \| \mathbf{f}_k \|^2 /2\right) } \right),
    \label{MSEUpperbound}
\end{align}

Next, we derive the upper bound of $\mathbb{E}\left[\mathcal{F}\left(\mathbf{w}^{(t+1)}\right)- \mathcal{F}\left(\mathbf{w}^*\right)\right]$. We take summation on both sides of \eqref{F_Upperbound} to obtain the following inequality:
\begin{equation}
    \label{F_all_Upperbound}
        \mathbb{E}[\mathcal{F}(\!\mathbf{w}^{(t+1)}\!)]
        \leq \mathbb{E}[\mathcal{F}(\!\mathbf{w}^{(t)}\!)] 
        \!-\!\frac{1}{2 \omega} \sum\nolimits_{k = 1}^K \left(\!\mathbb{E} [ \|\nabla F_k(\mathbf{w}_k^{(t)})\|^{2} ] 
        \!-\!2(\mathbb{E}[\|\mathbf{e}_{\mathrm{ds},k}^{(t)}\|^{2}] + \mathbb{E}[\|\mathbf{e}_{\mathrm{com},k}^{(t)}\|^{2}])\!\right).
\end{equation}
where $\omega\!=\!\max_k\!\omega_k$, and $\mu\!=\!\min_k\!\mu_k$. Substituting $\mathbb{E}[\|\mathbf{e}_{\mathrm{ds},k}^{(t)}\|^{2}]$ and $\mathbb{E}[\|\mathbf{e}_{\mathrm{com},k}^{(t)}\|^{2} ]$ respectively with \eqref{ErrorDevicesSelection} and \eqref{MSEUpperbound}, we have an upper bound of the overall loss function $\mathcal{F}_k(\cdot)$ at the round $(\!t\!+\!1\!)$ as
\begin{align}
    \label{Ft+1Upperbound}
        \mathbb{E}[\!\mathcal{F}(\!\mathbf{w}^{(\!t+1\!)}\!)\!]\!
        &\leq\!\mathbb{E}[\!\mathcal{F}(\!\mathbf{w}^{(\!t\!)}\!)\!]\!-\!\frac{1}{2 \omega}\sum_{k = 1}^K\!\left(\!\mathbb{E}[\|\nabla F_k(\!\mathbf{w}_k^{(\!t\!)}\!)\|^{2}]\!(\!1\!-\!2\beta_2
        d_k^{(\!t\!)}\!(\!\mathcal{M}_k,\!\mathbf{f}_k,\!\mathbf{u}_k\!)\!)
        \!-\!2\beta_1\!d_k^{(\!t\!)}\!(\!\mathcal{M}_k,\!\mathbf{f}_k,\!\mathbf{u}_k\!)\!\right)\!, 
\end{align}
where $d_k^{(t)}(\mathcal{M}_k, \mathbf{f}_k, \mathbf{u}_k)$ is defined by \eqref{d}.
Furthermore, based on Assumption~\ref{asp:f_convex}, an upper bound of $\mathbb{E} [\|\nabla F_k(\mathbf{w}_k^{(t)}) \|^{2} ]$ is obtained from [\citenum{friedlander2012hybrid}, eq.~(2.4)] as
\begin{equation}
    \label{NablaFLowerbound}
    \mathbb{E} [ \|\nabla F_k(\mathbf{w}_k^{(t)}) \|^{2} ] \geq 2\mu \mathbb{E} \left[ F_k(\mathbf{w}_k^{(t)}) - F_k\left(\mathbf{w}_k^*\right) \right].
\end{equation}
Plugging \eqref{NablaFLowerbound} into \eqref{Ft+1Upperbound} and subtracting $\mathcal{F}(\mathbf{w}^*)$ on the both sides of \eqref{Ft+1Upperbound}, we obtain \eqref{MSERecursionUpperbound}.

\bibliographystyle{IEEEtran} 
\bibliography{references}

\end{document}